\theoremstyle{definition}
\newtheorem{theorem}{Theorem}
\newtheorem{proposition}[theorem]{Proposition}
\newtheorem{definition}[theorem]{Definition}
\newcommand{\ip}[1]{\langle #1 \rangle}
\newcommand{\mycomment}[1]{}
\newcommand{\var}{\text{var}}
\newcommand{\tr}{\text{ tr }}
\newcommand{\bbE}{\mathbb{E}}
\newcommand{\bbN}{\mathbb{N}}
\newcommand{\bbR}{\mathbb{R}}
\newcommand{\mcD}{\mathcal{D}}
\newcommand{\mcF}{\mathcal{F}}
\newcommand{\mcU}{\mathcal{U}}
\newcommand{\ra}{\rightarrow}
\newcounter{parnum}
\newcommand{\npoint}{%
  \noindent\refstepcounter{parnum}%
  \makebox[0.5in][c]{\textbf{\arabic{parnum}.}} %
  \marginnote{\small\ttfamily\the\inputlineno}}
\newcommand{\chol}{\text{upper chol}}
\newcommand{\plim}[1]{\underset{#1}{\operatorname{plim}}\;}
\newcommand{\sym}{\mathcal{S}}
\newcommand{\bA}{\mathbf{A}}
\newcommand{\bB}{\mathbf{B}}
\newcommand{\bC}{\mathbf{C}}
\newcommand{\bD}{\mathbf{D}}
\newcommand{\bH}{\mathbf{H}}
\newcommand{\bI}{\mathbf{I}}
\newcommand{\bL}{\mathbf{L}}
\newcommand{\bp}{\mathbf{p}}
\newcommand{\br}{\mathbf{r}}
\newcommand{\bS}{\mathbf{S}}
\newcommand{\bT}{\mathbf{T}}
\newcommand{\bU}{\mathbf{U}}
\newcommand{\bV}{\mathbf{V}}
\newcommand{\bw}{\mathbf{w}}
\newcommand{\bX}{\mathbf{X}}
\newcommand{\bY}{\mathbf{Y}}
\newcommand{\bZ}{\mathbf{Z}}
\newcommand{\bSig}{\mathbf{\Sigma}}
\newcommand{\bSigma}{\mathbf{\Sigma}}
\newcommand{\bPsi}{\mathbf{\Psi}}
\newcommand{\bmu}{\mathbf{\mu}}
\begin{document}

\title{A Tractable State-Space Model for Symmetric Positive-Definite Matrices}

\date{Latest: \today, Original: October 8, 2013}

\author{Jesse Windle and Carlos M. Carvalho}

\maketitle

\begin{abstract}
  Bayesian analysis of state-space models includes computing the posterior
  distribution of the system's parameters as well as filtering, smoothing, and
  predicting the system's latent states.  When the latent states wander around
  $\bbR^n$ there are several well-known modeling components and computational
  tools that may be profitably combined to achieve these tasks.  However, there
  are scenarios, like tracking an object in a video or tracking a covariance
  matrix of financial assets returns, when the latent states are restricted to a
  curve within $\bbR^n$ and these models and tools do not immediately apply.
  Within this constrained setting, most work has focused on filtering and less
  attention has been paid to the other aspects of Bayesian state-space
  inference, which tend to be more challenging.  To that end, we present a
  state-space model whose latent states take values on the manifold of symmetric
  positive-definite matrices and for which one may easily compute the posterior
  distribution of the latent states and the system's parameters, in addition to
  filtered distributions and one-step ahead predictions.  Deploying the model
  within the context of finance, we show how one can use realized covariance
  matrices as data to predict latent time-varying covariance matrices.  This
  approach out-performs factor stochastic volatility.

  \vspace{12pt}
  \noindent {\emph{Keywords}: backward sample, covariance, dynamic, forward
    filter}
\end{abstract}

\tableofcontents


\section{Introduction}
\label{sec:intro}

A state-space model is often characterized by an observation density $f(y_t |
x_t)$ for the responses $\{y_t\}_{t=1}^T$ and a transition density $g(x_t |
x_{t-1})$ for the latent states $\{x_t\}_{t=1}^T$.  Usually, the latent states
can take on any value in $\bbR^n$; however, there are times when the states or
the responses are constrained to a manifold embedded in $\bbR^n$.  For instance,
econometricians and statisticians have devised symmetric positive-definite
matrix-valued statistics that can be interpreted as noisy observations of the
conditional covariance matrix of a vector of daily asset returns.  In that case,
it is reasonable to consider a state-space model that has covariance
matrix-valued responses (the statistics) and covariance matrix-valued latent
quantities (the time-varying covariance matrices).

Unfortunately, devising state-space models on curved spaces (like the set of
covariance matrices) that lend themselves to Bayesian analysis is not easy.
Just writing down the observation and transition densities can be difficult in
this setting, since one must define distributions on curved spaces.  Asking that
these densities then lead to some recognizable posterior distribution for the
latent states and the system's parameters compounds the problem.  Filtering is
slightly less daunting, since one can appeal to sequential methods.  (Filtering
or forward filtering refers to iteratively deriving the filtered distributions
$p(x_t | \mcD_t, \theta)$ where $\mcD_t$ is the data $\{y_s\}_{s=1}^t$ and
$\theta$ represents the system's parameters.)  Approximate methods can also
flounder.  For instance, when $y_t$ and $x_t$ exist in planar spaces, it is
common to write down observation and evolution equations and only specify the
first and second moments of the evolution innovations and observation errors,
which leads to tractable methods for filtering.  However, the notions of mean
and variance do not translate automatically to curved spaces and hence even this
density-less approach runs into trouble.

Despite these difficulties, it is still of interest to develop state-space
models for the set of covariance matrices and other curved spaces, since such
data exists.  In addition to the financial application described previously,
time-varying covariance matrices arise in computer vision
\citep{porikli-etal-2006}, and time varying linear subspaces arise in subspace
tracking \citep{srivastava-klassen-2004}.  Some work has explored forward
filtering such data.  \cite{tyagi-davis-2008} develop a Kalman-like filter
\citep{kalman-1960} for symmetric positive-definite matrices while
\cite{hauberg-etal-2013} develop an algorithm similar to the unscented Kalman
filter \citep{julier-uhlmann-1997} for geodesically complete manifolds.  Several
collaborations have made use of particle filters: \cite{srivastava-klassen-2004}
for the Grassmann manifold, \cite{tompkins-wolfe-2007} for the Steifel manifold,
\cite{kwon-park-2010} for the affine group, and \cite{choi-christensen-2011} for
the special Euclidean group.

None of these approaches have produced a state-space model amenable to fully
Bayesian inference---inference in which one can compute both the conditional and
marginalized versions of the filtered, smoothed, and predictive distributions
(where the conditioning and marginalizing is with respect to $\theta$) in
addition to the joint posterior distribution of the latent states and the
system's parameters.  (Smoothing refers to computing the distribution of past
states while predicting refers to computing the distribution of future states.)
One attempt in that direction, \cite{prado-west-2010} (p.\ 273), partially
address these issues for dynamic covariance matrices, but informally and with
less flexibility than the forthcoming.

We fully address these issues for a state-space model with symmetric
positive-definite or positive semi-definite rank-$k$ observations and symmetric
positive-definite latent states.  (Let $\sym_{m,k}^+$ denote the set of order
$m$, rank $k$, symmetric positive semi-definite matrices and let $\sym_m^+$
denote the set of order $m$, symmetric positive-definite matrices.)  The model
builds on the work of \cite{uhlig-1997}, who showed how to construct a
state-space model with $\sym_{m,1}^+$ observations and $\sym_{m}^+$ hidden
states and how, using this model, one can forward filter in closed form.  We
extend his approach to observations of arbitrary rank and show how to forward
filter, how to backward sample, and how to marginalize the hidden states to
estimate the system's parameters, all without appealing to fanciful MCMC
schemes.  (Backward sampling refers to taking a joint sample of the posterior
distribution of the latent states $p(\{x_t\}_{t=1}^T | \mcD_T, \theta)$ using
the conditional distributions $p(x_t | x_{t+1}, \mcD_t, \theta)$.)  The model's
estimates and one-step ahead predictions are exponentially weighted moving
averages (also known as geometrically weighted moving averages).  Exponentially
weighted moving averages are known to provide simple and robust estimates and
forecasts in many settings \citep{brown-book-1959}.

\subsection{A comment on the original motivation}

Our interest in covariance-valued state-space models arose from studying the
realized covariance statistic, which within the context of finance, roughly
speaking, can be thought of as a good estimate of the covariance matrix of a
collection of daily asset returns.  (The daily period is somewhat arbitrary; one
may pick any reasonably ``large'' period.)  We had been exploring the
performance of factor stochastic volatility models, along the lines of
\cite{aguilar-west-2000}, which use daily returns, versus exponentially weighted
moving averages of realized covariance matrices and found that exponentially
smoothing realized covariance matrices out-performed the more complicated factor
stochastic volatility models.  (Exponential smoothing refers to iteratively
calculating a geometrically weighted average of observations and some
initialization parameter.)  As Bayesians, we wanted to find a model-based
approach that is capable of producing similar results and the following fits
within that role.  To that end, as shown in Section \ref{sec:example}, this
simple model, used in conjunction with realized covariances, provides better
one-step ahead predictions of daily covariance matrices than factor stochastic
volatility (which only uses daily returns).

However, the specificity of this original application distracts from the larger
problem of devising state-space models on the set of covariance matrices or on
curved spaces more generally.  As noted previously, it is difficult to devise
tractable state-space models in this setting.  It is within this more general
problem that we find the model most notable.

\section{A Covariance Matrix-Valued State-Space Model}
\label{sec:the-model}

The model herein is closely related to several models found in the Bayesian
literature, all of which have their origin in variance discounting techniques
\citep{quintana-west-1987, west-harrison-book-1997}. \cite{uhlig-1997} provided
a rigorous justification for variance discounting, showing that it is a form of
Bayesian filtering for covariance matrices, and our model can be seen as a
direct extension of Uhlig's work.  (\cite{shephard-1994} constructs a similar
model, but only for the univariate case.)  The model of \cite{prado-west-2010}
(p.\ 273) is similar to ours, though less flexible.

\cite{uhlig-1997} considers observations, $\br_t \in \bbR^m$, $t=1, \ldots, T$,
that are conditionally normal given the hidden states $\{\bX_t\}_{t=1}^T$, which
take values in $\sym^+_m$.  We will henceforth write vectors in bold lower case
and matrices in bold upper case.  In particular, assuming $\bbE[\br_t] = 0$, his
model is
\begin{displaymath}
\begin{cases}
\br_t \sim N(0, \bX_{t}^{-1}), \\
\bX_t = \bT_{t-1}' \bPsi_t \bT_{t-1} / \lambda, & \bPsi_t \sim \beta_m \Big(
\frac{n}{2}, \frac{1}{2} \Big), \\
\bT_{t-1} = \chol \; \bX_{t-1} ,
\end{cases}
\end{displaymath}
where $n > m-1$ is an integer and $\beta_m$ is the multivariate beta
distribution, which is defined in Section \ref{sec:proofs}.  This model
possesses closed form formulas for forward filtering that only requires knowing
the outer product $\br_t \br_t'$; thus, one may arrive at equivalent estimates
of the latent states by letting $\bY_t = \br_t \br_t'$ and using the observation
distribution
\[
\bY_t \sim W_m(1, \bX_t^{-1})
\]
where $W_m(1,\bX_t^{-1})$ is the order $m$ Wishart distribution with $1$ degree
of freedom and scale matrix $\bX_t^{-1}$ as defined in Section \ref{sec:proofs}.
We show that one can extend this model for $\bY_t$ of any rank:
\begin{equation}
\label{eqn:the-model}
\tag{UE}
\begin{cases}
\bY_t \sim W_m(k, (k \bX_t)^{-1}), \\
\bX_t \sim \bT_t' \bPsi_t \bT_{t-1} / \lambda, & \bPsi_t \sim \beta_m \Big( \frac{n}{2},
\frac{k}{2} \Big), \\
\bT_{t-1} = \chol \; \bX_{t-1},
\end{cases}
\end{equation}
where $n > m-1$ and $k$ is an integer less than $m$ or is a real number greater
than $m-1$.  (When $k$ is an integer less than $m$, $\bY_t$ has rank $k$.)  Many
of the mathematical ideas needed to motivate Model \ref{eqn:the-model} (for
Uhlig extension) can be found in a sister paper \citep{uhlig-1994} to the
\cite{uhlig-1997} paper, and Uhlig could have written down the above model given
those results; though, he was focused specifically on the rank-deficient case,
and the rank-1 case in particular, as his 1997 work shows.  We contribute to
this discourse by constructing the model in a fashion that makes sense for
observations of all ranks, show that one may backward sample to generate a joint
draw of the hidden states, and demonstrate that one may marginalize the hidden
states to estimate the system's parameters $n$, $k$, and $\lambda$.

Model \ref{eqn:the-model} has a slightly different form and significantly more
flexibility than the model of \cite{prado-west-2010} (see p. 273), which is
essentially
\[
\begin{cases}
  \bY_t \sim W_m(\eta, \bX^{-1}_{t-1}), & \eta \in \{1, \ldots, m-1\} \cup (m-1,\infty), \\
  \bX_{t} = \bT_{t-1}' \bPsi_t \bT_{t-1} / \lambda, & \bPsi_t \sim \beta_m
  \Big(\frac{\lambda h_{t-1}}{2},
  \frac{(1-\lambda) h_{t-1}}{2} \Big), \\
  \bT_{t-1} = \chol{} \; \bX_{t-1}, \\
  h_{t-1} = \lambda h_{t-2} + 1.
\end{cases}
\]
As noted by Prado and West, $\lambda$ is constrained ``to maintain a valid
model, since we require either $h_t > m - 1$ or $h_t$ be integral [and less than
$m$].  The former constraint implies that $\lambda$ cannot be too small,
$\lambda > (m-2) / (m-1)$ defined by the limiting value.''  If $h_t = h$ is
integral, then $\lambda = (h-1) / h$.  Thus, one cannot pick from a range of
$\lambda$ for $h \leq m-1$, when $h$ must be an integer, which means there are
only $m-2$ allowed values of $\lambda$ for $\lambda \leq (m-2) / (m-1)$, the
limiting value when $h_t$ is sufficiently large.  When $m$ is large this is a
severe restriction.  Further, for $m \geq 2$, $h=2$ produces the smallest
possible value of $\lambda$; thus, $\lambda$ cannot be below $1/2$ unless $m=1$.
(We have replaced Prado and West's $\beta$ by $\lambda$ and their $q$ by $m$.)
The parameter $\lambda$ is important since it controls how much the model
smooths observations when forming estimates and one-step ahead predictions; thus
the constraints on $\lambda$ are highly undesirable.  In contrast, our model
lets $\lambda$ take on any value.

Given Model \ref{eqn:the-model}, we can derive several useful propositions.  The
proofs of these propositions, which synthesize and add to results from
\cite{uhlig-1994}, \cite{muirhead-1982}
\cite{diaz-garcia-gutierrez-jaimez-1997}, are technical, and hence we defer
their presentation to Section \ref{sec:proofs}.  Presently, we focus on the
\emph{closed form} formulas that one may make use of when forward filtering,
backward sampling, predicting one step into the future, and estimating $n$, $k$,
and $\lambda$.

First, some notation: inductively define the collection of data $\mcD_t =
\{\bY_t\} \cup \mcD_{t-1}$ for $t=1, \ldots, T$ with $\mcD_0 = \{\bSig_0\}$
where $\bSig_0$ is some covariance matrix.  Let the prior for $(\bX_1| \mcD_0)$
be $W_m(n, (k \bSigma_0)^{-1} / \lambda)$
where $W_m(d, \mathbf{V})$ is the Wishart distribution with $d \in \{1, \ldots,
m-1\} \cap (m-1,\infty)$ degrees of freedom and scale matrix $\mathbf{V}$.  (See
Definition \ref{def:wishart} for details.)  In the following, we implicitly
condition on the parameters $n$, $k$, and $\lambda$.

\begin{proposition}[Forward Filtering]
\label{fact:forward-filter}
Suppose $(\bX_t | \mcD_{t-1}) \sim W_m(n, (k \bSigma_{t-1})^{-1} / \lambda)$.
After observing $\bY_t$, the updated distribution is
\[
(\bX_t | \mcD_{t}) \sim W_m(k + n, (k \bSigma_t)^{-1})
\]
where 
\[
\bSigma_t = \lambda \bSigma_{t-1} + \bY_t.
\]
Evolving $\bX_t$ one step leads to
\[
(\bX_{t+1} | \mcD_t) \sim W_m(n, (k \bSigma_t)^{-1} / \lambda).
\]
\end{proposition}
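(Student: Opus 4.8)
The plan is to split the claim into two pieces handled by different tools: the Bayesian update $(\bX_t\mid\mcD_{t-1})\rightarrow(\bX_t\mid\mcD_t)$, which is a Wishart--Wishart conjugacy calculation, and the one-step propagation $(\bX_t\mid\mcD_t)\rightarrow(\bX_{t+1}\mid\mcD_t)$, which reduces to a distributional identity relating the Wishart and the matrix beta and so can be argued without ever writing down a singular density.

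For the update I would multiply the prior kernel of $(\bX_t\mid\mcD_{t-1})\sim W_m\!\big(n,(k\bSigma_{t-1})^{-1}/\lambda\big)$, namely $|\bX_t|^{(n-m-1)/2}\,\etr\!\big(-\tfrac12\lambda k\,\bSigma_{t-1}\bX_t\big)$, by the likelihood of $\bY_t$ given $\bX_t$ viewed as a function of $\bX_t$. The point is that both when $k>m-1$ (nonsingular $\bY_t$) and when $k<m$ is an integer (so $\bY_t\in\sym^+_{m,k}$ and its law is a density against the appropriate Hausdorff measure, as recorded in \cite{uhlig-1994} and \cite{diaz-garcia-gutierrez-jaimez-1997}), the $W_m\!\big(k,(k\bX_t)^{-1}\big)$ density depends on $\bX_t$ only through $|\bX_t|^{k/2}\,\etr\!\big(-\tfrac k2\,\bX_t\bY_t\big)$. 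Multiplying and collecting, the posterior kernel is $|\bX_t|^{(n+k-m-1)/2}\,\etr\!\big(-\tfrac k2(\lambda\bSigma_{t-1}+\bY_t)\bX_t\big)$, which is the kernel of $W_m\!\big(n+k,(k\bSigma_t)^{-1}\big)$ with $\bSigma_t=\lambda\bSigma_{t-1}+\bY_t$; verifying that the proportionality constant is the Wishart normalizer completes this part.

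For the propagation I would first record, as a lemma drawn from \cite{muirhead-1982} and \cite{uhlig-1994}, the multiplicative decomposition: if $\bW\sim W_m(a,\bV)$ with upper Cholesky factor $\bT$ (so $\bW=\bT'\bT$), and $\bPsi\sim\beta_m\!\big(\tfrac b2,\tfrac{a-b}{2}\big)$ is independent of $\bW$, then $\bT'\bPsi\bT\sim W_m(b,\bV)$. This follows from the additive split of the Wishart --- write $\bW=\bA+\bB$ with $\bA\sim W_m(b,\bV)$ and $\bB\sim W_m(a-b,\bV)$ independent --- together with the classical facts that $(\bT')^{-1}\bA\,\bT^{-1}$ is $\beta_m\!\big(\tfrac b2,\tfrac{a-b}{2}\big)$-distributed and independent of $\bW=\bA+\bB$; matching the joint law of $(\bW,\bPsi)$ across the two constructions yields $\bT'\bPsi\bT=\bA\sim W_m(b,\bV)$. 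Applying this with $a=n+k$, $b=n$, $\bW$ having the law $W_m\!\big(n+k,(k\bSigma_t)^{-1}\big)$ of $(\bX_t\mid\mcD_t)$, $\bT=\bT_t=\chol\bX_t$, and $\bPsi=\bPsi_{t+1}\sim\beta_m\!\big(\tfrac n2,\tfrac k2\big)$ gives $\bT_t'\bPsi_{t+1}\bT_t\sim W_m\!\big(n,(k\bSigma_t)^{-1}\big)$; dividing by $\lambda>0$ and using the scaling property $c\,W_m(d,\bV)\stackrel{d}{=}W_m(d,c\bV)$ delivers $(\bX_{t+1}\mid\mcD_t)\sim W_m\!\big(n,(k\bSigma_t)^{-1}/\lambda\big)$.

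The conjugacy algebra is routine; the main obstacle is the multiplicative decomposition lemma --- fixing the Cholesky convention so that the sandwiching $\bT'\bPsi\bT$ recovers precisely the $\bA$ summand, and invoking the correct parameterization of, and independence property for, the matrix beta variate that arises from the additive Wishart split --- and making both the lemma and the conjugacy step run uniformly over the integer regime $k<m$ and the real regime $k>m-1$, which for $\bY_t$ forces one into singular Wisharts and for $\bPsi_{t+1}$ into rank-deficient matrix betas. Assembling and adapting the relevant statements from \cite{uhlig-1994}, \cite{muirhead-1982}, and \cite{diaz-garcia-gutierrez-jaimez-1997} is where I would expect to spend most of the effort, rather than on the filtering recursion itself.
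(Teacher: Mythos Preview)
Your proposal is correct and follows essentially the same route as the paper: the update step is the same Wishart--Wishart conjugacy calculation, and your ``multiplicative decomposition lemma'' is exactly the paper's Theorem~\ref{thm:muirhead} (the bijection $\tau$ between the additive Wishart split $(\bA,\bB)$ and the pair $(\bS,\bU)$), applied in the direction $\tau^{-1}$ with $\bS=\bX_t$, $\bI-\bU=\bPsi_{t+1}$, and $\bB=\lambda\bX_{t+1}$. The paper likewise defers the substance of the argument to assembling that theorem from \cite{muirhead-1982}, \cite{uhlig-1994}, and \cite{diaz-garcia-gutierrez-jaimez-1997} across the full-rank and rank-deficient regimes, which is precisely where you anticipate the effort lying.
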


\begin{proposition}[Backward Sampling]
\label{fact:backward-sample}
The joint density of $(\{\bX_t\}_{t=1}^T | \mcD_T)$ can be decomposed as
\[
p(\bX_T | \mcD_T) \prod_{t=1}^{T-1} p(\bX_{t} | \bX_{t+1}, \mcD_t)
\]
(with respect to the product measure on the $T$-fold product of $\sym_{m}^+$
embedded in $\bbR^{m(m+1)/2}$ with Lebesgue measure) where the distribution of
$(\bX_t | \bX_{t+1}, \mcD_t)$ is a shifted Wishart distribution
\[
(\bX_t | \bX_{t+1}, \mcD_{t})  = \lambda \bX_{t+1} + \bZ_{t+1}, \; \bZ_{t+1} \sim W_m(k, (k
\bSigma_{t})^{-1}).
\]
\end{proposition}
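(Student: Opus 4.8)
The plan is to establish the Markov-type factorization first and then identify the conditional distribution $(\bX_t \mid \bX_{t+1}, \mcD_t)$ explicitly. For the factorization, I would argue that conditionally on $\theta$ the pair $(\{\bX_t\}, \{\bY_t\})$ forms a hidden Markov model: given $\bX_t$, the future states $\bX_{t+1}, \bX_{t+2}, \dots$ and the future data $\bY_{t+1}, \dots$ are independent of the past data $\mcD_t$. Hence $p(\bX_t \mid \bX_{t+1}, \dots, \bX_T, \mcD_T) = p(\bX_t \mid \bX_{t+1}, \mcD_t)$, and iterating this from $t=T-1$ down to $t=1$ telescopes the joint density $p(\{\bX_t\}_{t=1}^T \mid \mcD_T)$ into $p(\bX_T \mid \mcD_T) \prod_{t=1}^{T-1} p(\bX_t \mid \bX_{t+1}, \mcD_t)$, with all densities taken with respect to Lebesgue measure on the $m(m+1)/2$ coordinates of each copy of $\sym_m^+$. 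The terminal factor $p(\bX_T \mid \mcD_T)$ is the filtered distribution $W_m(k+n, (k\bSigma_T)^{-1})$ from Proposition \ref{fact:forward-filter}.

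The substantive step is computing $p(\bX_t \mid \bX_{t+1}, \mcD_t)$. By Bayes' rule, $p(\bX_t \mid \bX_{t+1}, \mcD_t) \propto p(\bX_{t+1} \mid \bX_t)\, p(\bX_t \mid \mcD_t)$, where $p(\bX_t \mid \mcD_t)$ is the filtered $W_m(k+n, (k\bSigma_t)^{-1})$ density and $p(\bX_{t+1} \mid \bX_t)$ is the transition density induced by $\bX_{t+1} = \bT_t' \bPsi_{t+1} \bT_t / \lambda$ with $\bT_t = \chol\,\bX_t$ and $\bPsi_{t+1} \sim \beta_m(\tfrac n2, \tfrac k2)$. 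I would change variables from $\bPsi_{t+1}$ to $\bX_{t+1}$ (using the Cholesky factor $\bT_t$ of the conditioning matrix), pick up the relevant Jacobian, and then multiply by the filtered Wishart density for $\bX_t$. The claim is equivalent to showing that the resulting density in $\bX_t$, for fixed $\bX_{t+1}$, is precisely that of $\lambda \bX_{t+1} + \bZ_{t+1}$ with $\bZ_{t+1} \sim W_m(k, (k\bSigma_t)^{-1})$ — i.e. a Wishart density supported on the shifted cone $\{\bX_t \succ \lambda \bX_{t+1}\}$, evaluated at $\bX_t - \lambda\bX_{t+1}$, times the Jacobian $1$ of the shift. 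This is exactly the kind of Wishart/multivariate-beta conjugacy computation that the forward-filtering proof already performs, so I would lean on the same lemmas (the density transformations and the $\beta_m$–$W_m$ relationships recorded in Section \ref{sec:proofs}, drawing on \cite{uhlig-1994}, \cite{muirhead-1982}, and \cite{diaz-garcia-gutierrez-jaimez-1997}).

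The main obstacle I anticipate is the bookkeeping of Jacobians and support constraints in the change of variables $\bPsi_{t+1} \leftrightarrow \bX_{t+1}$ together with the simultaneous appearance of $\bX_t$ both inside $\bT_t$ and as the argument of the filtered density: one must be careful that the map is a diffeomorphism on the appropriate open cone and that the determinant factors (powers of $\det \bX_t$, $\det \bX_{t+1}$, $\det \bSigma_t$, and $\lambda$) recombine correctly into the normalizing constant of $W_m(k, (k\bSigma_t)^{-1})$ with no residual dependence on $\bX_{t+1}$. A clean way to organize this is to verify the identity at the level of unnormalized densities and then check normalization separately by confirming that $\lambda\bX_{t+1} + \bZ_{t+1}$ with $\bZ_{t+1} \sim W_m(k,(k\bSigma_t)^{-1})$, when combined with the forward prior $(\bX_{t+1}\mid\mcD_t)$, reproduces the filtered law $(\bX_t \mid \mcD_t) \sim W_m(k+n,(k\bSigma_t)^{-1})$ — a consistency check that also serves as an independent confirmation of the result.
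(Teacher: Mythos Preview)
Your proposal is correct but takes a more laborious route than the paper. You set up the computation via Bayes' rule, $p(\bX_t \mid \bX_{t+1}, \mcD_t) \propto p(\bX_{t+1}\mid\bX_t)\,p(\bX_t\mid\mcD_t)$, which requires first deriving the transition density $p(\bX_{t+1}\mid\bX_t)$ from the $\beta_m$ density of $\bPsi_{t+1}$ and then tracking the Jacobian and determinant factors through the product. The paper avoids this entirely by reusing the forward-filtering bijection: applying $\tau^{-1}$ (Definition~\ref{def:tau}) to the pair $(\bX_t,\,\bI-\bPsi_{t+1})$ given $\mcD_t$ produces, by Theorem~\ref{thm:muirhead}, \emph{independent} variables $(\bZ_{t+1},\,\lambda\bX_{t+1})$ with $\bZ_{t+1}\sim W_m(k,(k\bSigma_t)^{-1})$, $\lambda\bX_{t+1}\sim W_m(n,(k\bSigma_t)^{-1})$, and $\bX_t=\bZ_{t+1}+\lambda\bX_{t+1}$ (this is exactly equation~\eqref{eqn:example-transformation} shifted one index). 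Because $\bZ_{t+1}\perp\bX_{t+1}$ given $\mcD_t$, conditioning on $\bX_{t+1}$ is immediate and the shifted-Wishart form drops out with no Jacobian bookkeeping at all; Proposition~\ref{prop:backward-sample} records this once in the abstract $(\bS,\bU,\bA,\bB)$ notation. Amusingly, the ``consistency check'' you propose at the end---that $\lambda\bX_{t+1}+\bZ_{t+1}$ combined with the prior on $\bX_{t+1}$ reproduces the filtered law of $\bX_t$---is precisely the independence statement the paper uses as its \emph{starting point}, so your verification step is their proof. Your approach buys nothing extra here, but it is the natural thing to try if one has not noticed that the forward-filter decomposition already delivers the needed independence.
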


\begin{proposition}[Marginalization]
\label{fact:marginalization}
The joint density of of $\{\bY_t\}_{t=1}^T$ is given by
\[
p(\{\bY_t\}_{t=1}^T | \mcD_0) = \prod_{t=1}^T p(\bY_t | \mcD_{t-1}) 
\]
with respect to the differential form $\bigwedge_{t=1}^T (d\bY_t)$ where
$(d\bY_t)$ is as found in Definition \ref{def:wishart} for either the
rank-deficient or full-rank cases, depending on the rank of $\bY_t$.
(Differential forms, otherwise known as $K$-forms, are vector fields that may be
used to simplify multivariate analysis.  In particular, one may define densities
with respect to differential forms.  \cite{mikusinski-taylor-2002} provide a
good introduction to differential forms while \cite{muirhead-1982} shows how to
use them for statistics.) The density $p(\bY_t | \mcD_{t-1})$ is
\[
\pi^{-(mk - k^2)/2}
\frac{\Gamma_m(\frac{\nu}{2})}{\Gamma_m(\frac{n}{2})\Gamma_m(\frac{k}{2})}
\frac{|\bL_t|^{(k-m-1)/2} |\bV_t|^{n/2} }{|\bV_{t} + \bY_t|^{\nu/2}}
\]
with respect to $(d\bY_t)$ in the rank-deficient case and is
\[
\frac{\Gamma_m(\frac{\nu}{2})}{\Gamma_m(\frac{n}{2})\Gamma_m(\frac{k}{2})}
\frac{|\bY_t|^{(k-m-1)/2} |\bV_t|^{n/2} }{|\bV_{t} + \bY_t|^{\nu/2}}
\]
with respect to $(d\bY_t)$ in the full-rank case, where $\nu = n + k$, and \(
\bV_t = \lambda \bSigma_{t-1} \) with $\bSigma_{t} = \lambda \bSigma_{t-1} +
\bY_t$ like above.
\end{proposition}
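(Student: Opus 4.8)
The plan is to establish the factorization $p(\{\bY_t\}_{t=1}^T \mid \mcD_0) = \prod_{t=1}^T p(\bY_t \mid \mcD_{t-1})$ as an immediate consequence of the chain rule for densities, once we verify that each conditional density is well-defined with respect to the appropriate differential form; the substance of the proof is then the explicit computation of the single-step marginal $p(\bY_t \mid \mcD_{t-1})$. First I would fix $t$ and condition on $\mcD_{t-1}$, so that by Proposition \ref{fact:forward-filter} we have the prior $(\bX_t \mid \mcD_{t-1}) \sim W_m(n, (k\bSigma_{t-1})^{-1}/\lambda) = W_m(n, \bV_t^{-1}/k)$ with $\bV_t = \lambda \bSigma_{t-1}$, and the observation density $(\bY_t \mid \bX_t) \sim W_m(k, (k\bX_t)^{-1})$. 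The target density is obtained by integrating out $\bX_t$:
\[
p(\bY_t \mid \mcD_{t-1}) = \int_{\sym_m^+} p(\bY_t \mid \bX_t)\, p(\bX_t \mid \mcD_{t-1})\, (d\bX_t).
\]
I would write both Wishart densities explicitly using the forms recorded in Definition \ref{def:wishart} (full-rank form for the prior on $\bX_t$, and either the rank-deficient or full-rank form for $\bY_t$ according to its rank), collect all factors of $|\bX_t|$, and recognize the integrand in $\bX_t$ as proportional to a Wishart kernel with shape $(n+k)/2$ and scale matrix $(\bV_t + \bY_t)^{-1}$ (up to constants). The normalizing constant of that Wishart then yields the factor $\Gamma_m(\nu/2)/|\bV_t + \bY_t|^{\nu/2}$, and assembling the leftover constants and powers of $|\bV_t|$, $|\bY_t|$ (or $|\bL_t|$ in the rank-deficient case, where $\bL_t$ is the diagonal matrix of nonzero eigenvalues arising in the rank-deficient Wishart form) gives exactly the two displayed expressions with the stated $\pi$ power $-(mk-k^2)/2$ in the rank-deficient case.

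For the product formula itself, I would argue by induction on $T$, or simply invoke the elementary identity $p(\bY_1,\dots,\bY_T \mid \mcD_0) = \prod_{t=1}^T p(\bY_t \mid \bY_1,\dots,\bY_{t-1},\mcD_0)$ and note that $\mcD_{t-1} = \{\bY_{t-1},\dots,\bY_1,\bSigma_0\}$, so each factor is the single-step marginal already computed; the care here is only to confirm that the conditioning is valid measure-theoretically, i.e.\ that the joint law admits a density with respect to $\bigwedge_{t=1}^T (d\bY_t)$ and that the recursion $\bSigma_t = \lambda\bSigma_{t-1} + \bY_t$ makes $\bV_t$ a deterministic function of $\mcD_{t-1}$, which it is. One subtlety worth flagging is that in the rank-deficient case the support of $\bY_t$ is the lower-dimensional manifold $\sym_{m,k}^+$, so the base measure $(d\bY_t)$ differs in dimension from the full-rank case; the factorization statement is still correct provided each $(d\bY_t)$ is chosen to match the rank of the corresponding $\bY_t$, which is exactly how the proposition is phrased.

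The main obstacle I anticipate is the rank-deficient Wishart integral: keeping the differential-form bookkeeping straight when $\bY_t$ lives on $\sym_{m,k}^+$, in particular tracking how the eigenvalue-factor $|\bL_t|^{(k-m-1)/2}$ and the $\pi$-power $\pi^{-(mk-k^2)/2}$ emerge from the Jacobian of the decomposition used to define the rank-deficient Wishart density (as in \cite{uhlig-1994} or \cite{diaz-garcia-gutierrez-jaimez-1997}). The full-rank computation is a standard Wishart–Wishart conjugate-type integral and should go through cleanly; the rank-deficient one requires invoking the correct change-of-variables formula for the singular Wishart, but once the density in Definition \ref{def:wishart} is in hand the integration in $\bX_t$ is formally identical, since only $\bY_t$'s parametrization changes and $\bX_t$ remains full-rank. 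I would therefore present the full-rank case in detail and then indicate the modifications for the rank-deficient case, relying on the cited Jacobian results to justify the extra $\pi$ and $|\bL_t|$ factors.
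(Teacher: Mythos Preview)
Your proposal is correct and follows essentially the same approach as the paper: factor the joint density via the chain rule, then for each $t$ write out the product of the Wishart densities $p(\bY_t\mid\bX_t)\,p(\bX_t\mid\mcD_{t-1})$, collect the $|\bX_t|$ and exponential terms into a $W_m(\nu,(k(\bV_t+\bY_t))^{-1})$ kernel, and integrate using its normalizing constant. The only cosmetic difference is that the paper writes out the rank-deficient case in detail and declares the full-rank case identical, whereas you propose the reverse order.
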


Examining the one-step ahead forecasts of $\bY_t$ elucidates how the model
smooths.  Invoking the law of iterated expectations, one finds that
$\bbE[\bY_{t+1} | \mcD_{t}] = \bbE[\bX_{t+1}^{-1} | \mcD_{t}]$.  Since
$(\bX_{t+1}^{-1} | \mcD_{t})$ is an inverse Wishart distribution, its
expectation is proportional to $\bSigma_t$.  Solving the recursion for
$\bSigma_t$ from Fact \ref{fact:forward-filter} shows that
\begin{equation}
\label{eqn:Sigma}
\bSigma_t = \sum_{i=0}^{t-1} \lambda^i \bY_{t-i} + \lambda^t \bSigma_0.
\end{equation}
Thus, the forecast of $\bY_{t+1}$ will be a scaled, geometrically weighted average
of the previous observations.  If, further, one enforces the constraint
\begin{equation}
\label{eqn:constraint}
 \frac{1}{\lambda} = 1 + \frac{k}{n-m-1}
\end{equation}
then taking a step from $\bX_{t}$ to $\bX_{t+1}$ does not change its harmonic
average, that is $\bbE[\bX_{t}^{-1} | \mcD_{t}] = \bbE[\bX_{t+1}^{-1} | \mcD_{t}]$.  It
also implies that the one-step ahead point forecast of $(\bY_{t+1} | \mcD_{t})$ is
\begin{equation}
\label{eqn:constrained-forecast}
\bbE[\bY_{t+1} | \mcD_{t}] = (1-\lambda) \bSigma_{t} = (1 - \lambda) \sum_{i=0}^{t-1}
\lambda^i \bY_{t-i} + (1-\lambda) \lambda^{t} \bSigma_0.
\end{equation}
Hence in the constrained case, the one-step ahead forecast is the geometrically
weighted average of past observations.  For a geometrically weighted average,
the most recent observations are given more weight as $\lambda$ decreases.  It
has been known for some time that such averages provide decent one-step ahead
forecasts \citep{brown-book-1959}.

\section{Example: Covariance Forecasting}
\label{sec:example}

As noted initially, Model \ref{eqn:the-model} is an extension of the one
proposed by \cite{uhlig-1997}.  For the original model, when $k=1$, one might
consider observing a vector of heteroskedastic asset returns $\br_t \sim N(0,
\bX_t^{-1})$ where the precision matrix $\bX_t$ changes at each time step. The
extended model allows the precision matrix to change less often than the
frequency with which the returns are observed.  For instance, one may be
interested in estimating the variance of the daily returns, assuming that the
variance only changes from day to day, using \emph{multiple} observations taken
from \emph{within} the day.

To that end, suppose the vector of intraday stock prices evolves as geometric
Brownian motion so that on day $t$ the $m$-vector of log prices is
\[
\bp_{t,s} = \bp_{t,0} + \bmu s + \bV_{t}^{1/2} (\bw_{t+s} - \bw_t)
\]
at time $s$, where $s$ the fraction of the trading day that has elapsed,
$\{\bw_s\}_{s \geq 0}$ is an $m$-dimensional Brownian motion, and $\bV_t^{1/2}
{\bV_t^{1/2}}' = \bX^{-1}_{t}$.  In practice, $\bmu$ is essentially zero, so we
will ignore that term.  Further, suppose one has viewed the vector of prices at
$k+1$ equispaced times throughout the day so that \( \br_{t,i} = \bp_{t,i} -
\bp_{t,i-1/k}, \; i = 1/k, \ldots, 1.  \) Then \( \br_{t,i} \sim N(0, \bX_t^{-1}
/ k) \) and $\bY_t = \sum_{i=1}^k \br_{t,i} \br_{t,i}'$ is distributed as
$W_m(k, \bX_t^{-1}/k)$.  Letting $\bX_{t} = \mcU(\bX_{t-1})' \bPsi_{t}
\mcU(\bX_{t-1})$ where $\mcU(\cdot)$ computes the upper Cholesky decomposition
and $\bPsi_{t} \sim \beta_m(n/2, k/2)$, we recover Model \ref{eqn:the-model}
exactly.  Of course, in reality, returns are not normally distributed; they are
heavy tailed and there are diurnal patterns within the day.  Nonetheless, the
realized covariance literature, which we discuss in more detail in Section
\ref{sec:realized-covariance-matrices}, suggests that taking $\bY_t$ to be an
estimate of the daily variance $\bX_t^{-1}$ is a reasonable thing to do; though
to suppose that the error is Wishart is a strong assumption.  More dubious is
the choice of the evolution equation for $\bX_t$; a point we discuss further in
Section \ref{sec:discussion}.  But the evolution equation for
$\{\bX_t\}_{t=1}^T$ does provide a likelihood that accommodates closed form
forward filtering and backward sampling formulas, and possesses only a few
parameters, which makes it a relatively cheap model to employ.

The one mild challenge when applying the model is estimating $\bSigma_0$.
However, it is possible to ``cheat'' and not actually estimate $\bSig_0$ at all.
Consider (\ref{eqn:Sigma}) and ponder the following two observations.  First,
$\bSig_t$ is a geometrically weighted sum in $\{\bSig_0, \bY_1, \ldots,
\bY_t\}$.  Second, the least important term in the sum is $\bSig_0$.  Thus, one
can reasonably ignore $\bSig_0$ if $t$ is large enough.  To that end, we suggest
setting aside the first $\tau_1$ observations and using $\{\bSig_{\tau_1},
\bY_{\tau_1+1}, \ldots, \bY_{\tau_2}\}$ where $\bSig_{\tau_1} =
\sum_{i=0}^{\tau_1} \lambda^i \bY_{\tau_1-i}$ to learn $n$, $k$, and $\lambda$
using Proposition \ref{fact:marginalization} and the prior $p(X_{\tau_1+1} |
\mcD_{\tau_1}) \sim W_m(n, (k \Sigma_{\tau_1})^{-1} / \lambda)$.  It may seem
costly to disregard the first $\tau_1$ observations, but since there are so few
parameters to estimate this is unlikely to be a problem---the remaining data
will suffice.

\begin{figure}
\begin{center}
\includegraphics[scale=0.4]{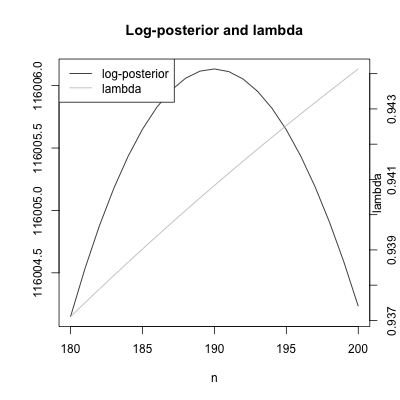}
\includegraphics[scale=0.5]{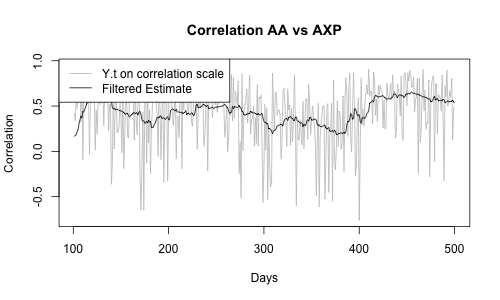}
\caption{\label{fig:wishart-dlm-llh-30} Level sets of the log posterior and
  filtered estimates on the correlation scale.}
\end{center}
On the left is the log posterior of $n$ calculated using $\{\bSigma_{50},
\bY_{51}, \ldots, \bY_{100}\}$ and constraint (\ref{eqn:constraint}).  The black
line is the log posterior in $n$, which has a mode at $n=190$ corresponding to
$\lambda = 0.94$.  The grey line is $\lambda$ as a function of $n$.  On the
right are the values of $\bY_t$ and the estimate $\bbE[\bX_t^{-1} | \mcD_t]$ for
$t=101, 500$ on the correlation scale.  A truncated time series was used to
provide a clear picture.
\end{figure}

This is the process used to generate Figure \ref{fig:wishart-dlm-llh-30} (with
$\tau_1 = 50$ and $\tau_2 = 100$).  The data set follows the $m=30$ stocks that
comprised the Dow Jones Industrial Average in October, 2010.  Eleven intraday
observations were taken every trading day for almost four years to produce 927
daily, rank-10 observations $\{\bY_{t}\}_{t=1}^{927}$.  Since the observations
are rank-deficient, we know that $k=10$.  (In the full-rank case, we will
estimate $k$.)  We constrain $\lambda$ using (\ref{eqn:constraint}) so that the
only unknown is $n$.  Given an improper flat prior for $n > 29$, the posterior
mode is $n=190$, implying that $\lambda = 0.94$, a not unusual value for
exponential smoothing.  Once $n$ is set, one can filter forward, backward
sample, and generate one-step ahead predictions in closed form.  The right side
of Figure \ref{fig:wishart-dlm-llh-30} shows the filtered covariance between
Alcoa Aluminum and American Express on the correlation scale.

However, one need not take such a literal interpretation of the model.  Instead
of trying to justify its use on first principles, one may simply treat it as a
covariance-valued state-space model, which we do presently.  As noted in the
introduction and elaborated on in Section
\ref{sec:realized-covariance-matrices}, realized covariance matrices are good
estimates of the daily covariance matrix of a vector of financial asset returns.
Since realized covariance matrices are good estimates it is natural to try to
use them for prediction.  The statistics themselves place very few restrictions
on the distribution of asset prices and their construction is non-parametric.
In other words, the construction of a realized covariance matrix (at least the
construction we use) says little about the evolution of the latent daily
covariance matrices.

\begin{figure}
\begin{center}
\includegraphics[scale=0.4]{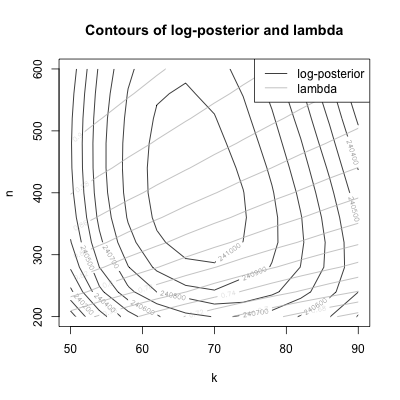}
\includegraphics[scale=0.5]{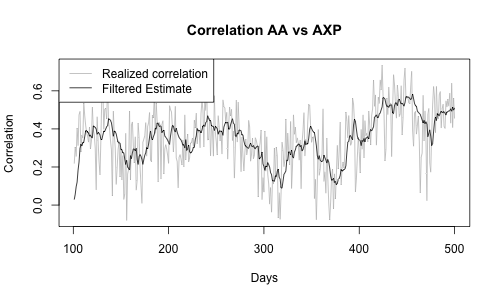}
\caption{\label{fig:wishart-dlm-llh} Level sets of the log posterior and
  filtered estimates on the correlation scale.}
\end{center}
On the left is the log posterior of $(k,n)$ calculated using $\{\bSigma_{50},
\bY_{51}, \ldots, \bY_{100}\}$ and constraint (\ref{eqn:constraint}).  The
black line is the level set of the log posterior as a function of $(k,n)$, which
has a mode at $(67, 396)$ corresponding to $\lambda = 0.85$.  The grey line is
the level set of $\lambda$ as a function of $(k,n)$.  On the right are the
values of $\bY_t$ and the estimate $\bbE[\bX_t^{-1} | \mcD_t]$ for $t=101, 500$
on the correlation scale.
\end{figure}

But we do not need to know the exact evolution of the latent daily covariance
matrices to employ Model \ref{eqn:the-model} to make short-term predictions.  To
that end, we may treat realized covariance matrices $\{\bY_t\}_{t=1}^T$ as
$\sym_{m}^+$-valued data that track the latent daily covariances
$\{\bX_{t}^{-1}\}_{t=1}^T$.  We construct these realized statistics using the
same $m=30$ stocks over the same time period as above, but using all of the
intraday data, which results in full-rank observations (see Section
\ref{sec:thedata} for details).  We follow the same basic procedure as above to
estimate $k$ and $n$, and implicitly $\lambda$ by constraint
(\ref{eqn:constraint}).  Selecting an improper flat prior for $n > 29$ and $k >
29$ yields the log-posterior found in Figure \ref{fig:wishart-dlm-llh}.  The
posterior mode is at $(67, 396)$ implying $\lambda = 0.85$.  The gray lines in
Figure \ref{fig:wishart-dlm-llh} correspond level sets of $\lambda$ in $k$ and
$n$.  As seen in the figure, the uncertainty in $(k,n)$ is primarily in the
direction of the steepest ascent of $\lambda$.  One can use Proposition
\ref{fact:marginalization} and the method of generating $\Sigma_{\tau_1}$ to
construct a random walk Metropolis sampler as well.  Doing that we find the
posterior mean to be $(67, 399)$, which implies an essentially identical
$\lambda$.  A histogram of the posterior of $\lambda$ is in Figure
\ref{fig:lambda}, showing that, though the the direction of greatest variation
in $(k,n)$ corresponds to changes in $\lambda$, the subsequent posterior
standard deviation of $\lambda$ is small.

Recall, our original motivation for studying $\sym_{m}^+$-valued state-space
models was the observation that exponentially smoothing realized covariance
matrices generates better one-step ahead predictions than factor stochastic
volatility.  In those initial experiments, we used cross-validation to pick the
smoothing parameter $\lambda$.  Figure \ref{fig:lambda} shows that one arrives
at the same conclusion under two different measures of performance using Model
\ref{eqn:the-model} and the method described previously to pick $\lambda$.  The
mesures are described in the caption to Figure \ref{fig:lambda}.  To summarize:
it is better to use our simple $\sym_{m}^+$-valued state-space model with
realized covariance matrices to make short term predictions than it is to use
factor stochastic volatility with only daily returns.

Though we pick a point estimate of the system's parameters above and then fix
that value to make predictions, one can operate in a fully Bayesian manner when
computing one-step ahead predictions as well as filtered distributions and the
posterior distribution of the latent states.  In particular, one can sample the
posterior distribution $p(\theta | \mcD_t)$, $\theta = (k,n)$, and then use
those posterior samples to draw from $p(\bX_t | \mcD_t, \theta)$ or $p(\bX_{t+1}
| \mcD_t, \theta)$ using forward filtering or to draw from $p(\{\bX_s\}_{s=1}^t
| \mcD_t, \theta)$ using forward filtering and backward sampling to get the
corresponding joint distributions $p(\bX_t, \theta | \mcD_t)$, $p(\bX_{t+1},
\theta | \mcD_t)$, and $p(\{\bX_s\}_{s=1}^t, \theta | \mcD_t)$.

\begin{figure}
\begin{center}
\raisebox{-.5\height}{%
\includegraphics[scale=0.4]{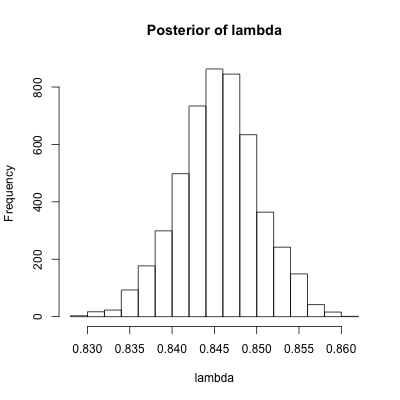}
}
\begin{tabular}{l | l l }
Model & MVP & PLLH \\
\hline
FSV 1 & 0.01021 & 94910 \\
FSV 2 & 0.00992 & 95155 \\
UE    & 0.00929 & 96781 \\
\multicolumn{3}{c}{} \\
\multicolumn{3}{l}{MVP: lower is better} \\
\multicolumn{3}{l}{PLLH: higher is better}
\end{tabular}
\caption{\label{fig:lambda} The posterior of $\lambda$ and the performance of
  Model \ref{eqn:the-model}.}
\end{center}
On the left: the posterior of $\lambda$ calculated using $\{\bSigma_{50},
\bY_{51}, \ldots, \bY_{100}\}$, constraint (\ref{eqn:constraint}), and posterior
samples of $(k,n)$.  On the right: the performance of Model \ref{eqn:the-model}
versus factor stochastic volatility using 1 and 2 factors.  ``MVP'' stands for
minimum variance portfolios and ``PLLH'' stands for predictive log-likelihood.
For all of the models, a sequence of one-step ahead predictions of the latent
covariance matrices $\{ \hat \bX_t^{-1} \}_{t=101}^{920}$ was generated.  For
Model \ref{eqn:the-model}, we set $\lambda$ to be the posterior mode found from
the data $\{\bSigma_{50}, \bY_{51}, \ldots, \bY_{100}\}$, as described in
Section \ref{sec:example}, to generate the one-step ahead predictions.  For
factor stochastic volatility, we picked the point estimate $\hat \bX_t^{-1}$ to
be an approximation of the mean of $(\bX_t^{-1} | \mcF_{t-1})$ where $\mcF_t =
\{\br_1, \ldots, \br_{t}\}$ and $\br_t$ is the vector of open to close
log-returns on day $t$.  For the MVP column, the one-step ahead predictions were
used to generate minimum variance portfolios for $t=101, \ldots, 920$.  The
column reports the empirical standard deviation of the subsequent portfolios.
Lower standard deviation is better.  For the PLLH column, the one-step ahead
predictions were used to calculate the predictive log-likelihood
$\sum_{i=101}^{920} \log \phi(\br_t; 0, \hat \bX_t^{-1})$ where $\phi$ is a
multivariate Gaussian kernel.  A higher predictive log-likelihood is better.
Model \ref{eqn:the-model} does better on both counts.
\end{figure}

\section{Discussion}
\label{sec:discussion}

Employing exponentially weighted moving averages to generate short-term
forecasts is not new.  These methods were popular at least as far back as the
first half of the 20th century \citep{brown-book-1959}.  In light of this, it
may seem that Model \ref{eqn:the-model} is rather unglamorous.  But this is only
because we have explicitly identified how the model uses past observations to
make predictions.  In fact, many models of time-varying variance behave
similarly.  For instance, GARCH \citep{bollerslev-1986} does exponential
smoothing with mean reversion to predict the variance using square returns.
Stochastic volatility \citep{taylor-1982} does exponential smoothing with mean
reversion to predict the log variance using log square returns.  Models that
include a leverage effect do exponential smoothing so that the amount of
smoothing depends on the direction of the returns.  Thus, it should not be
surprising or uninteresting when a state-space model generates predictions with
exponential smoothing or some variation thereof.

This helps explains why a simple model (\ref{eqn:the-model}) with high-quality
observations can generate better short-term predictions than a complicated model
(factor stochastic volatility) with low-quality data.  First, both models, in
one way or another, are doing something similar to exponential smoothing.
Second, the true covariance process seems to revert quite slowly.  Thus, there
will not be much difference between a one-step ahead forecast that lacks mean
reversion (Model UE) and a one-step ahead forecast that includes mean reversion
(factor stochastic volatility).  Since the prediction mechanisms are similar,
the model that uses a ``higher resolution'' snapshot of the latent covariance
matrices has the advantage.  Of course, these observations only apply when using
factor stochastic volatility with daily returns.  It may be the case that one
can use intraday information along with some specialized knowledge about the
structure of market fluctuations (like factor stochastic volatility) to generate
better estimates and predictions.

Despite Model UE's short-term forecasting success, it does have some faults.
First, the evolution of $\{\bX_t\}_{t=1}^T$ can be rather degenerate.  In the
one-dimensional case, when $\{\log x_t\}_{t=1}^\infty$ is not a martingale,
$\{x_t\}_{t=1}^\infty$ either almost surely converges to zero or almost surely
diverges.  (The discussion before Proposition \ref{prop:backward-sample} expands
upon this point.)  Presumably, the multivariate case suffers from something
similar and, clearly, this does not reflect the dynamics we want to capture.
Second, its $k$-step ahead predictions do not revert to some mean, which is what
one would expect when modeling a stationary process.  In fact, the first point
suggests that things are worse than that: the $k$-step ahead predictive
distributions may degenerate.  Consequently, the model will perform poorly as
the horizon of the prediction increases.

Thus, to the larger question, ``How does one construct rich, tractable
state-space models on curved spaces,'' we only gave a partial answer, showing
how to create a tractable model---one in which the densities of interest may be
computed and sampled.  In essence, descriptive richness was sacrificed for
tractability.  One might proceed in the opposite direction by endowing the
latent process with rich dynamics initially.  For instance, one may transform a
positive-definite matrix $\bX$ into an unconstrained planar coordinate system
using the factorization $\bU' \exp(\bD)\bU = \bX$, where $\bU$ is upper
triangular, $\bD$ is diagonal, and $\exp$ is the matrix exponential, and then
model the dynamics in the planar coordinates $(\bU, \bD)$.  But, in that case,
one must deal with a potentially inconvenient distribution in the $\bX$
coordinates for forward filtering or backward sampling.  Comparing the benefits
of each approach within the context of Bayesian state-space inference is left to
future work.


\section{Technical Details}
\label{sec:proofs}


Much of the calculus one needs can be found \cite{uhlig-1994} or
\cite{muirhead-1982}.  We synthesize those results here.  We are not aware of
results in either regarding backward sampling or marginalization.

First, some notation: Assume $k,m \in \bbN$, $k \leq m$.  Let $\sym^+_{m,k}$
denote the set of positive semi-definite symmetric matrices of rank $k$ and
order $m$.  When $k = m$, we drop $k$ from the notation so that $\sym^+_m$
denotes the set of positive-definite symmetric matrices of order $m$.
For symmetric matrices $\bA$ and $\bB$, let $\bA < \bB$ denote $\bB - \bA \in
\sym_m^+$ 
.  For $\bA \in \sym^+_m$ let
\[
\{\sym^+_{m,k} < \bA\} = \{\bC \in \sym^+_{m,k}: \bC < \bA\}.
\]
If $k > m-1$ is real and we write $\sym_{m,k}^+$ then we implicitly mean
$\sym_{m}^+$.  We will use $|\cdot|$ to denote the determinant of a matrix and
$\bI$ to denote the identity.  We at times follow \cite{muirhead-1982} and
define densities with respect to differential forms (also known as $K$-forms or
differential $K$-forms).  \cite{mikusinski-taylor-2002} is a good introduction
to calculus on manifolds.  The handouts of \cite{edelman-handouts-2005} provide
a more succinct introduction.

\begin{definition}[Wishart distribution] 
  \label{def:wishart}
  A positive semi-definite symmetric matrix-valued random variable $\bY$ has
  Wishart distribution $W_m(k, \bV)$ for $k \in \bbN$ and $\bV \in \sym_m^+$ if
  \[
  \bY \sim \sum_{i=1}^k \br_i \br_i', \; \; \br_i \stackrel{iid}{\sim} N(0, \bV), \; \; i
  = 1, \ldots, k.
  \]
  When $k > m - 1$, the density for the Wishart distribution is
  \[
  \frac{|\bY|^{(k-m-1)/2}}{2^{mk/2}
    \Gamma_m \big(\frac{k}{2} \big) |\bV|^{k/2}}
  \exp \Big(\tr - \frac{1}{2} \bV^{-1} \bY \Big)
  \]
  \citep{muirhead-1982} with respect to the volume element
  \[
  (d\bY) = \bigwedge_{1 \leq i \leq j \leq m} d\bY_{ij}.
  \]
  When $k \leq m - 1$ and $\bY$ is rank deficient, the density is
  \[
  \frac{\pi^{-(mk - k^2)/2} |\bL|^{(k-m-1)/2}}{2^{mk/2}
    \Gamma_k\big(\frac{k}{2}\big) |\bV|^{k/2}}
  \exp \Big(\tr - \frac{1}{2} \bV^{-1} \bY \Big)
  \]
  with respect to the volume element
  \[
  (d\bY) = 2^{-k} \prod_{i=1}^k l_i^{m-k} \prod_{i < j}^{k} (l_i - l_j) (\bH_1'd\bH_1)
  \wedge \bigwedge_{i=1}^k dl_i
  \]
  where $Y = \bH_1 \bL \bH_1'$, $H_1$ is a matrix of orthonormal columns of
  order $m \times k$, and $\bL = \text{diag}(l_1, \ldots, l_k)$ with decreasing
  positive entries \citep[Thm.\ 6]{uhlig-1994}.  The notation $(\bH_1' d\bH_1)$
  is shorthand for a differential $K$-form from the Steifel manifold $V_{m,k}$
  embedded in $\bbR^{m \times k}$ where $K = mk - k(k+1)/2$ \citep[p.\
  63]{muirhead-1982}.  One can extend the definition of the Wishart distribution
  to real values of $k > m-1$ for $\sym_m^+$-valued random variables by defining
  $Y \sim W_m(k, \bV)$ to have the full rank density defined above.
\end{definition}

\begin{definition}[the bijection $\tau$]
\label{def:tau}
Assume $m \in \bbN$ and $k \in \{1, \ldots, m\}$.  A single bijection provides
the key to both the evolution of $\bX_t$ in Model \ref{eqn:the-model} and to the
definition of the beta distribution.  In particular, let $\tau : \sym_{m,k}^+
\times \sym_{m}^+ \ra \sym_{m}^+ \times \{\sym_{m,k}^+ < \bI\}$ take $(\bA,\bB)$
to $(\bS,\bU)$ by letting $\bT'\bT = \bA + \bB$ be the Cholesky factorization of
$\bA + \bB$ and letting
\[
\begin{cases}
\bS = \bA + \bB, \\
\bU = {\bT^{-1}}' \bA \bT^{-1}.
\end{cases}
\]
Conversely, let $g : \sym_{m}^+ \times \{\sym_{m,k}^+ < \bI \} \ra \sym_{m,k}^+ \times
\sym_{m}^+$ take $(\bS,\bU)$ to $(\bA,\bB)$ by letting $\bT'\bT = \bS$ be the
Cholesky decomposition of $\bS$ and
\[
\begin{cases}
\bA = \bT' \bU \bT, \\
\bB = \bT' (\bI - \bU) \bT.
\end{cases}
\]
One can see that $g$ is the inverse of $\tau$ since $\tau(g(\bS,\bU)) = (\bS,\bU)$ and
$g(\tau(\bA,\bB)) = (\bA,\bB)$.
\end{definition}


\begin{definition}[beta distribution]
  \label{def:beta-distribution}
  Let $\bA \sim W_m(k, \bSig^{-1})$ and $\bB \sim W_m(n, \bSig^{-1})$ be
  independent where $n > m-1$ and either $k < m$ is an integer or $k > m-1$ is
  real-valued.  Let $(\bS,\bU) = \tau(\bA,\bB)$.  The beta distribution,
  $\beta_m(n/2, k/2)$, is the distribution of $\bU$.  When $k < m$ is an
  integer, the beta distribution $\beta_m(k/2, n/2)$ is the distribution of $\bI
  -\bU$ where $\bU \sim \beta_m(n/2, k/2)$. (See Definition 1 from
  \cite{uhlig-1994} and p.\ 109 in \cite{muirhead-1982}.)
\end{definition}

The following theorem synthesizes results from \cite{uhlig-1994},
\cite{muirhead-1982}, and \cite{diaz-garcia-gutierrez-jaimez-1997}.

\begin{theorem}
  Based on \cite[Thm.\ 3.3.1]{muirhead-1982}, \cite[Thm.\ 7]{uhlig-1994} and
  \cite[Thm.\ 2]{diaz-garcia-gutierrez-jaimez-1997}
  \label{thm:muirhead}.  Let $n > m-1$ and let either $k < m$ be an integer or
  $k > m-1$ be real-valued.  The bijection $\tau : \sym_{m,k}^+ \times
  \sym_{m}^+ \ra \sym_{m}^+ \times \{\sym_{m,k}^+ < \bI\}$ from Definition
  \ref{def:tau} changes
\begin{equation}
\label{muirhead-ab-dist}
\bA \sim W_m(k, \bSigma^{-1}) \perp \bB \sim W_m(n, \bSigma^{-1})
\end{equation}
to
\begin{equation}
\label{muirhead-su-dist}
\bS \sim W_m(n+k, \bSigma^{-1}) \perp \bU \sim \beta_m(k/2, n/2).
\end{equation}
\end{theorem}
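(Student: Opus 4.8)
The plan is to prove Theorem~\ref{thm:muirhead} by a direct change of variables, parametrizing through the inverse map $g$ of Definition~\ref{def:tau}: write the joint density of $(\bA,\bB)$ as a product of two Wishart densities and push it forward to $(\bS,\bU)$ using the Jacobian of $g$. One half of the conclusion is essentially free from the additive representation in Definition~\ref{def:wishart}: realizing $\bA=\sum_{i=1}^{k}\br_i\br_i'$ and $\bB=\sum_{i=k+1}^{n+k}\br_i\br_i'$ with the $\br_i$ i.i.d.\ $N(0,\bSigma^{-1})$ gives $\bS=\bA+\bB=\sum_{i=1}^{n+k}\br_i\br_i'\sim W_m(n+k,\bSigma^{-1})$ (for real $k>m-1$ this is simply the full-rank density convention). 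Likewise, that $\bU$ is matrix-beta is close to Definition~\ref{def:beta-distribution} itself, except that the definition fixes one scale matrix whereas the theorem allows arbitrary $\bSigma$, so along the way the argument also shows the beta law does not depend on $\bSigma$. The real content is therefore the \emph{independence} of $\bS$ and $\bU$, which drops out of the density factorization below.

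\emph{Full-rank case, $k>m-1$.} Here $\bA,\bB,\bS\in\sym_m^+$ and $\bU$ ranges over $\{\sym_m^+<\bI\}$. By independence the density of $(\bA,\bB)$ with respect to $(d\bA)\wedge(d\bB)$ is, up to the constant $\big(2^{m(n+k)/2}\Gamma_m(\tfrac n2)\Gamma_m(\tfrac k2)\,|\bSigma|^{-(n+k)/2}\big)^{-1}$,
\[
|\bA|^{(k-m-1)/2}\,|\bB|^{(n-m-1)/2}\,\etr\!\big(-\tfrac12\bSigma(\bA+\bB)\big).
\]
Substituting $\bA=\bT'\bU\bT$, $\bB=\bT'(\bI-\bU)\bT$ with $\bT'\bT=\bS$ gives $|\bA|=|\bS|\,|\bU|$, $|\bB|=|\bS|\,|\bI-\bU|$, and $\bA+\bB=\bS$. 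The Jacobian factors through $(\bS,\bU)\mapsto(\bS,\bA)\mapsto(\bB,\bA)$: the first map is, at fixed $\bS$, the congruence $\bU\mapsto\bT'\bU\bT$, with Jacobian $|\det\bT|^{m+1}=|\bS|^{(m+1)/2}$ (Muirhead's Jacobian for congruences of symmetric matrices, \cite{muirhead-1982}), and the second is the unimodular shear $(\bS,\bA)\mapsto(\bS-\bA,\bA)=(\bB,\bA)$; hence $(d\bA)\wedge(d\bB)=|\bS|^{(m+1)/2}(d\bS)\wedge(d\bU)$ up to orientation. The powers of $|\bS|$ total $(k-m-1)/2+(n-m-1)/2+(m+1)/2=(n+k-m-1)/2$, so the pushed-forward density is
\[
\frac{|\bS|^{(n+k-m-1)/2}\etr(-\tfrac12\bSigma\bS)}{2^{m(n+k)/2}\Gamma_m(\tfrac{n+k}{2})\,|\bSigma|^{-(n+k)/2}}\;\cdot\;\frac{\Gamma_m(\tfrac{n+k}{2})}{\Gamma_m(\tfrac n2)\Gamma_m(\tfrac k2)}\,|\bU|^{(k-m-1)/2}\,|\bI-\bU|^{(n-m-1)/2}.
\]
The first factor is the $W_m(n+k,\bSigma^{-1})$ density in $\bS$ and the second is exactly the matrix-beta density of Definition~\ref{def:beta-distribution} in $\bU$; since the joint density is a product of a function of $\bS$ alone and one of $\bU$ alone, $\bS$ and $\bU$ are independent with the stated marginals.

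\emph{Rank-deficient case, $k<m$ an integer.} The argument has the same shape, but $\bA$ and $\bU$ now live on the lower-dimensional manifolds $\sym_{m,k}^+$ and $\{\sym_{m,k}^+<\bI\}$, so one must use the singular volume elements of Definition~\ref{def:wishart}, where $(d\bA)$ carries the Stiefel $K$-form $(\bH_1' d\bH_1)$ for $\bA=\bH_1\bL\bH_1'$ together with $2^{-k}\prod_i l_i^{m-k}\prod_{i<j}(l_i-l_j)$. The density of $(\bA,\bB)$ is the singular Wishart density for $\bA$ times the ordinary Wishart density for $\bB$; the substitution $\bA=\bT'\bU\bT$, $\bB=\bT'(\bI-\bU)\bT$ is as before, and what replaces the congruence Jacobian is its analogue on the rank-$k$ cone, supplied by \cite[Thm.~7]{uhlig-1994} and \cite[Thm.~2]{diaz-garcia-gutierrez-jaimez-1997}. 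Feeding this in and bookkeeping the extra structure---the factor $\pi^{-(mk-k^2)/2}$, the multivariate gamma $\Gamma_k$ rather than $\Gamma_m$ in the singular density, the powers of $2$, and the eigenvalue factors $|\bL|^{(k-m-1)/2}$ carried by $(d\bA)$ and by $(d\bU)$---the density again splits into a full-rank $W_m(n+k,\bSigma^{-1})$ density for $\bS$ (note $\bS=\bA+\bB$ is positive definite since $\bB$ is, and $n+k>m-1$) times the singular matrix-beta density for $\bU$, again giving independence and the marginals.

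The routine parts---the Gaussian-integral derivation of the full-rank congruence Jacobian, the cancellation of determinant powers, and checking that the normalizing constants $2^{m(n+k)/2}$, $\Gamma_m$, $\Gamma_k$, and the $\pi$-powers match on both sides---are bookkeeping. The main obstacle is the rank-deficient Jacobian: tracking how $(d\bA)$, assembled from a Stiefel $K$-form and a Vandermonde-type product in the eigenvalues of $\bA$, transforms under $\bA\mapsto\bT'\bA\bT$, and pairing it consistently against the corresponding $(d\bU)$. This is exactly where the synthesis of \cite{uhlig-1994} and \cite{diaz-garcia-gutierrez-jaimez-1997} is needed, and where one must handle the orientations of the wedge products carefully so that the result is an honest (nonnegative) density.
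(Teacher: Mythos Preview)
Your proposal is correct, and the full-rank change-of-variables argument you give is exactly Muirhead's Theorem~3.3.1, while your rank-deficient sketch correctly identifies the singular Jacobian as the sole nontrivial ingredient and points to the same two references. The paper's own proof of Theorem~\ref{thm:muirhead} is nothing more than three citations---to Muirhead for the full-rank case, to Uhlig for rank~$1$, and to D\'{\i}az-Garc\'{\i}a and Guti\'errez-J\'aimez for general rank deficiency---so you have in fact written out more than the paper does; your argument is precisely what those cited works contain, which is the approach the paper is invoking by reference.

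One minor remark: your aside that Definition~\ref{def:beta-distribution} ``fixes one scale matrix'' is a misreading---the definition already allows arbitrary $\bSigma$---but your density computation correctly shows the $\bU$-marginal is free of $\bSigma$ regardless, so this does not affect the argument.
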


\begin{proof}
  Thm.\ 3.3.1 in \cite{muirhead-1982} proves this in the full rank case. Thm.\ 7
  in \cite{uhlig-1994} proves this in the rank 1 case.  Thm.\ 2 in
  \cite{diaz-garcia-gutierrez-jaimez-1997} proves it in the general rank
  deficient case.
\end{proof}

Theorem \ref{thm:muirhead} justifies forward filtering in Model
\ref{eqn:the-model} as follows.

\begin{proof}[Forward Filtering]
  Suppose we start at time $t-1$ with data $\mcD_{t-1}$, so that the joint
  distributions of $\bX_{t-1}$ and $\bPsi_t$ is characterized by
\[
\bX_{t-1} \sim W_m(n+k, (k \bSigma_{t-1})^{-1}) \perp (\bI-\bPsi_t) \sim
\beta_m(k/2,n/2),
\]
which looks like (\ref{muirhead-su-dist}).  Theorem \ref{thm:muirhead} shows
that the bijection $\tau^{-1}$ takes $(\bX_{t-1}, \bI-\bPsi_t)$ to
\[
\bZ_t \sim W_m(k, (k \bSigma_{t-1})^{-1}) \perp \lambda \bX_t \sim W_m(n, (k
\bSigma_{t-1})^{-1}),
\]
which is (\ref{muirhead-ab-dist}) after applying the transformation summarized
by
\begin{equation}
\label{eqn:example-transformation}
\begin{cases}
\bZ_t = \bT_{t-1}' (\bI - \bPsi_t) \bT_{t-1}, \\
\lambda \bX_t = \bT_{t-1}' \bPsi_t \bT_{t-1}, \\
\bX_{t-1} = \bZ_t + \lambda \bX_t, \\
\bT_{t-1} = \chol{} \; \bX_{t-1}.
\end{cases}
\end{equation}
The transformation includes the evolution equation in (\ref{eqn:the-model})
since \( \bX_t = \bT_{t-1}' \bPsi_t \bT_{t-1} / \lambda \).  It also yields
$(\bX_t | \mcD_{t-1}) \sim W_m(n, (k \bSigma_{t-1})^{-1} / \lambda)$.  Conjugate
updating then yields $(\bX_t | \mcD_{t}) \sim W_m(n+k, (k \bSig_{t})^{-1})$
where $\bSig_{t} = \lambda \bSig_{t-1} + \bY_t$.
\end{proof}

The reader may notice that the choice of distribution for $\bPsi_{t}$ is
precisely the one that facilitates forward filtering.  In particular, assuming
that $(\bX_{t-1} | \mcD_{t-1})$ has an acceptable distribution to start, then
$(\bX_t | \mcD_{t-1})$ will have an acceptable distribution to update, so that
$(\bX_t | \mcD_t)$ will have a distribution that lets us play the game all over
again.  However, we cannot easily write down the distribution of $(\bX_{t+k} |
\mcD_{t})$ for anything but $k=0$ or $1$.  To see why, assume that we start at
time $t-1$ with data $\mcD_{t-1}$ and evolve to $(\bX_{t} | \mcD_{t-1}) \sim
W_m(n, (k \bSig_{t-1})^{-1} / \lambda)$ just like above.  Now consider moving
from $\bX_t$ to $\bX_{t+1}$ without updating:
\[
\begin{cases}
\bT_{t} = \chol \; \bX_{t} \\
\bX_{t+1} = \bT_{t}' \bPsi_t \bT_{t}, & \bPsi_t \sim \beta_m(n/2,k/2).
\end{cases}
\]
The distribution of $\bI - \bPsi_t$ is $\beta_m(n/2,k/2)$ but the distribution
of $(\bX_t|\mcD_{t-1})$ is $W_m(n, \ldots)$.  We cannot apply Theorem
\ref{thm:muirhead} at this point because there is a mismatch in the degrees of
freedom of $(\bX_{t} | \mcD_{t-1})$ and the parameters of $\bI - \bPsi_t$---we
need $n + k$ not $n$ degrees of freedom!  Thus, the distribution of $(\bX_{t+1}
| \mcD_{t-1})$ is unknown.

Despite not knowing its distribution, one can show that the evolution of
$\{\bX_{t}\}_{t=1}^T$ is rather degenerate.  To see this, consider the one
dimensional case, in which
\[
x_t = x_{t-1} \psi_t / \lambda, \; \psi_t \sim \beta(n/2,k/2).
\]
Following \cite{shephard-1994}, transforming this equation by the logarithm
yields
\[
w_t = w_{t-1} + \nu_t , \; \nu_t \sim \log( \beta(n/2,k/2) / \lambda )
\]
where $w_t = \log x_t$ and $\nu_t = \log (\psi_t/\lambda)$.  Let $m =
\bbE[\nu_t]$.  When $m \neq 0$, the law of large numbers says that for almost
every path there is some $t^*$ such that $t (m - |m|/2) < w_t < t (m + |m|/2)$
for $t > t^*$.  That is, the paths diverge.  Hence, when $m \neq 0$, the paths
of $x_t$ either converge to $0$ or diverge.  The same phenomenon can be seen
when numerically simulating data in the multivariate case.  This makes
generating synthetic data difficult since $\bX_t$ can quickly become numerically
singular.  It also implies that the predictive distributions are unruly.

\begin{proposition}
  \label{prop:backward-sample}
  Assume $\bS$ and $\bU$ are as in Theorem \ref{thm:muirhead} and let $(\bA,
  \bB) = \tau^{-1}(\bS, \bU)$.  Then the conditional distribution of $(\bS|\bB)$
  is
  \begin{equation}
    \label{eqn:muirhead-backward-sample}
    (\bS|\bB) = \bB + \bZ, \; \bZ \sim W_m(k, \bSigma^{-1}).
  \end{equation}
\end{proposition}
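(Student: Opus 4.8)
The plan is to read the conclusion off Theorem \ref{thm:muirhead} together with the single structural fact that the first coordinate of the map $\tau$ (equivalently its inverse $g$) is the sum $\bA+\bB$, so that conditioning on $\bB$ leaves a leftover Wishart summand.

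First I would invoke Theorem \ref{thm:muirhead} in the reverse direction. Since $\tau$ is a bijection, it pushes the law (\ref{muirhead-ab-dist}) forward to the law (\ref{muirhead-su-dist}) if and only if $\tau^{-1}=g$ pushes (\ref{muirhead-su-dist}) forward to (\ref{muirhead-ab-dist}). Because the hypothesis says $(\bS,\bU)$ has the distribution in (\ref{muirhead-su-dist}), it follows that $(\bA,\bB)=\tau^{-1}(\bS,\bU)$ has the distribution in (\ref{muirhead-ab-dist}); in particular $\bA\sim W_m(k,\bSigma^{-1})$ is independent of $\bB\sim W_m(n,\bSigma^{-1})$.

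Next I would use the explicit formula for $g$ from Definition \ref{def:tau}. Writing $\bT'\bT=\bS$ for the Cholesky factorization, we have $\bA=\bT'\bU\bT$ and $\bB=\bT'(\bI-\bU)\bT$, hence $\bA+\bB=\bT'\bU\bT+\bT'(\bI-\bU)\bT=\bT'\bT=\bS$. So $\bS=\bA+\bB$ holds identically on the sample space.

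Finally I would condition on $\bB$. Since $\bA$ and $\bB$ are independent, the conditional law of $\bA$ given $\bB$ equals its marginal law $W_m(k,\bSigma^{-1})$; and because $\bS=\bB+\bA$, the conditional distribution of $(\bS\mid\bB)$ is that of $\bB+\bZ$ with $\bZ:=\bA\sim W_m(k,\bSigma^{-1})$ drawn independently of $\bB$, which is exactly (\ref{eqn:muirhead-backward-sample}). I do not expect a genuine obstacle here: the substantive work is already packaged inside Theorem \ref{thm:muirhead}. The only points requiring care are that Theorem \ref{thm:muirhead} is being applied to $\tau^{-1}$ rather than to $\tau$ (which is legitimate precisely because $\tau$ is a bijection and the two named distributions are its pushforward/pullback), and that the identity $\bS=\bA+\bB$ must be extracted from the definition of $g$ rather than taken for granted.
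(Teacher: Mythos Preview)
Your proof is correct and reaches the conclusion by a cleaner route than the paper. Both arguments rest on the same two ingredients---that $(\bA,\bB)=\tau^{-1}(\bS,\bU)$ has the product law (\ref{muirhead-ab-dist}) and that $\bS=\bA+\bB$---but they exploit them differently. The paper fixes $\bB$, writes the joint density $f_A(\bA)f_B(\bB)$ with respect to the differential form $(d\bA)\wedge(d\bB)$, pulls back under the affine change of variables $\bA=\bS-\bB$ (tracking the $K$-form carefully to cover the rank-deficient case), and then reads off $f_A(\bS-\bB)(d\bS)$ as the conditional density of $(\bS\mid\bB)$. You instead work purely at the level of random variables: since $\bA\perp\bB$, the conditional law of $\bA$ given $\bB$ is its marginal $W_m(k,\bSigma^{-1})$, and $\bS=\bB+\bA$ immediately gives (\ref{eqn:muirhead-backward-sample}). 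Your argument is shorter and avoids the differential-form bookkeeping entirely; the paper's approach, by contrast, makes the measure-theoretic content explicit, which may be reassuring when $\bA$ is rank-deficient and lives on a lower-dimensional manifold, but is not strictly necessary once independence has been secured by Theorem \ref{thm:muirhead}.
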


\begin{proof}
  Let $\bS$ and $\bU$ be as in Theorem $\ref{thm:muirhead}$ and let $(\bA, \bB)
  = \tau^{-1}(\bS, \bU)$.  Let $p$ be the rank of $\bA$.  Fix $\bB$ and define a
  change of variables $g$ by $\bA = \bS - \bB$.  Jointly, $(\bA,\bB)$ has a
  density with respect to the differential form $(d\bA) \wedge (d \bB)$ where
  $\bA$ is a $K$-form where $K = np - p(p-1)/2$:
  \[
  (d\bA) = \sum_{i_1 < \ldots < i_K} f_{i_1 < \ldots < i_K} (\bA) \; d\bA_{i_1} \wedge
  \cdots \wedge d\bA_{i_K}
  \]
  where the index of $d\bA$ corresponds to the vectorized (by column) upper
  triangular portion of $\bA$.  Under $g$, the pull back of $d\bA_i$ is
  \[
  g^*(d\bA_i) = d\bS_i;
  \]
  thus, 
  \[
  (d\bS) = \sum_{i_1 < \ldots < i_K} f_{i_1 < \ldots < i_K} (\bS -  \bB) \; 
  d\bS_{i_1} \wedge \cdots \wedge d\bS_{i_K},
  \]
  where, again, the index corresponds to the vectorized upper triangular
  portion.
  Let $f_A(\bA) f_B(\bB)$ be the density of $(\bA,\bB)$ with respect to the
  differential form $(d\bA) \wedge (d\bB)$.  Under $g$, the differential form
  corresponding to the density of $(\bA,\bB)$,
  \[
  f_A(\bA) f_B(\bB) \; (d\bA) \wedge (d\bB),
  \]
  becomes
  \[
  f_A(\bS - \bB) f_B(\bB) \; (d\bS) \wedge (d \bB)
  \]
  on the manifold
  \[
  \{(\bS, \bB) : \bS \in \sym_{m}^+, \; \bB \in \sym_m^+, \; \bS - \bB \in \sym_{m,k}^+ \}.
  \]
  We know that $f_B(\bB) (d\bB)$ is the differential form corresponding to the
  distribution of $\bB$, hence $f_A(\bS-\bB) (d\bS)$ describes the conditional
  distribution of $(\bS|\bB)$.  Doing another change of variables shows that
  $(\bS|\bB)$ is a shifted Wishart distribution, that is
  \[
  (\bS|\bB) = \bB + \bZ, \; \bZ \sim W_m(k, \bSig^{-1}).
  \]
\end{proof}

\begin{proof}[Backward Sampling]

  The Markovian structure of the model ensures that we can decompose the joint
  density of the latent states given $\mcD_T$ (and $n$, $k$, $\lambda$) as
  \[
  p(\bX_T | \mcD_T) \prod_{i=1}^{T-1} p(\bX_t | \bX_{t+1}, \mcD_t).
  \]
  (The density is taken with respect to product measure on the $T$-fold product
  of $\sym_{m}^+$ embedded in $\bbR^{m(m+1)/2}$ with Lebesgue measure).
  Applying Proposition \ref{prop:backward-sample} with $(\bX_{t-1}|\mcD_{t-1})$
  as $\bS$, $\bI - \bPsi_t$ as $\bU$, and $(\lambda \bX_t | \mcD_{t-1})$ as
  $\bB$, we find that the distribution of $(\bX_{t-1} \mid \bX_{t}, \mcD_{t-1})$
  is
  \[
  (\bX_{t-1}|\bX_{t},\mcD_{t-1}) = \lambda \bX_{t}+ \bZ_{t}, \; \bZ_{t} \sim W_m(k,
  (k \bSigma_{t-1})^{-1}).
  \]


\end{proof}

\begin{proof}[Marginalization]
  First, by conditioning we can express the density $p(\{\bY_t\}_{t=1}^T |
  \mcD_0)$ as 
  \[
  \prod_{t=1}^T p(\bY_t | \mcD_{t-1})
  \]
  with respect to the differential form $\bigwedge_{t=1}^T (d\bY_t)$ where
  $(d\bY_t)$ is as in Definition \ref{def:wishart}.

  Thus, we just need to derive the distribution of $(\bY_t | \mcD_{t-1})$.
  Assume that $n > m-1$ and that either $k < m$ is an integer or $k > m-1$ is
  real-valued.  Suppose that $(\bY_t | \bX_t) \sim W_m(k, (k\bX_t)^{-1})$ and
  $(\bX_t | \mcD_{t-1}) \sim W_m(n, (k \bV_t)^{-1})$ where $\bV_t = \lambda
  \bSig_{t-1}$.  Then the density for $(\bY_t | \mcD_{t-1})$ is
  \[
  \pi^{-(mk - k^2)/2}
  \frac{\Gamma_m(\frac{\nu}{2})}{\Gamma_m(\frac{n}{2})\Gamma_k(\frac{k}{2})}
  \frac{|\bL_t|^{(k-m-1)/2} |\bV_t|^{n/2} }{|\bV_{t} + \bY_t|^{\nu/2}}
  \]
  in the rank-deficient case and is 
  \[
  \frac{\Gamma_m(\frac{\nu}{2})}{\Gamma_m(\frac{n}{2})\Gamma_m(\frac{k}{2})}
  \frac{|\bY_t|^{(k-m-1)/2} |\bV_t|^{n/2} }{|\bV_{t} + \bY_t|^{\nu/2}}
  \]
  in the full-rank case, with respect to the differential form $(d\bY_t)$ is as
  found in Definition \ref{def:wishart} for either the rank-deficient or
  full-rank cases respectively.

  We will only prove the rank-deficient case, since the full-rank case is
  essentially identical.  Consider the joint density $p(\bY_t | \bX_t) p(\bX_t |
  \mcD_{t-1})$:
\begin{gather*}
 \frac{\pi^{-(mk - k^2)/2} |k \bX_t|^{k/2}}{2^{mk/2}
    \Gamma_k\big(\frac{k}{2}\big)} |\bL_t|^{(k-m-1)/2}
  \exp \Big(\frac{-1}{2} \tr k \bX_{t} \bY_t \Big) \\
\frac{|k \bV_{t}|^{n/2}}
{2^{nm/2} \Gamma_m(\frac{n}{2})}
|\bX_t|^{(n-m-1)/2} \exp \Big( \frac{-1}{2} \tr k \bV_{t} \bX_t \Big)
\end{gather*}
(where $\bY_t = \bH_t \bL_t \bH_t$, $\bL_t$ is a $k \times k$ diagonal matrix
with decreasing entries, and $\bH_t$ is in the Steifel manifold $V_{m,k}$)
with respect to $(d\bY_t) \wedge (d\bX_t)$, which is
\[
\pi^{-(mk - k^2)/2} \frac{|\bL_t|^{(k-m-1)/2}}{2^{km/2} \Gamma_k(\frac{k}{2})}
\frac{|k \bV_{t}|^{n/2}}{2^{nm/2} \Gamma_m(\frac{n}{2})}
\; k^{km/2} |\bX_t|^{(\nu-m-1)/2}
\exp \Big( \frac{-1}{2} \tr k \big(\bV_{t} + \bY_t \big) \bX_t \Big),
\]
$\nu = n + k$.  The latter terms are the kernel for a Wishart distribution in
$\bX_t$.  Integrating the kernel with respect to $\bX_t$ yields
\[
\frac{2^{\nu m/ 2} \Gamma_m(\frac{\nu}{2}) }{|k (\bV_{t} + \bY_t)|^{\nu/2}}.
\]
Hence the density of $(\bY_t | \mcD_{t-1})$ is
\[
\pi^{-(mk - k^2)/2}
\frac{\Gamma_m(\frac{\nu}{2})  k^{\nu m/2}}{\Gamma_m(\frac{n}{2})\Gamma_k(\frac{k}{2})}
\frac{|\bL_t|^{(k-m-1)/2} |\bV_t|^{n/2} }{|k(\bV_{t} + \bY_t)|^{\nu/2}}
\]
with respect to $(d \bY_t)$.
Factoring the $k$ in the denominator gives us
\[
\pi^{-(mk - k^2)/2}
\frac{\Gamma_m(\frac{\nu}{2})}{\Gamma_m(\frac{n}{2})\Gamma_k(\frac{k}{2})}
\frac{|\bL_t|^{(k-m-1)/2} |\bV_t|^{n/2} }{|\bV_{t} + \bY_t|^{\nu/2}}.
\]
\end{proof}

\appendix

\section{Realized Covariance Matrices}
\label{sec:realized-covariance-matrices}

Realized covariance matrices are symmetric positive-definite estimates of the
daily quadratic variation of a multidimensional continuous-time stochastic
process.  Within the context of financial time series, there is both theoretical
and empirical evidence to suggest that a realized covariance matrix can be
interpreted as an estimate of the covariance matrix of the open to close log
returns.

Imagine that the market in which the assets are traded is open 24 hours a day
and that we are interested in estimating the covariance matrix of daily log
returns.  Following \cite{bn2002}, let $\bp_s$ be the $m$-vector of log prices
where $s$ is measured in days and suppose that it is a Gaussian process of the
form
\[
\bp_s = \int_{0}^s {\bV_u^{1/2}} d \bw_u
\]
where $\{\bw_s\}_{s \geq 0}$ is an $m$-dimensional Brownian motion and
$\{{\bV_s^{1/2}}\}_{s}$ is a continuous, deterministic, symmetric positive
definite $m \times m$ process such that the square of $\{{\bV_s^{1/2}}\}_{s}$ is
integrable.  Then the day $t$ vector of log returns $\br_t = (\bp_t -
\bp_{t-1})$ is distributed as
\[
\br_t \sim N \Big( 0, \int_{t-1}^t \bV_u du \Big).
\]
where $\bV_u = {{\bV_u^{1/2}}} {{\bV_u^{1/2}}}'$.  The quadratic covariation matrix
(quadratic variation henceforth) measures the cumulative local (co)-fluctuations
of the sample paths:
\[
\ip{\bp}_s = \plim{|\Delta_N| \ra 0} \sum_{i=1}^{K_N} (\bp_{u_i} -
\bp_{u_{i-1}}) (\bp_{u_i} - \bp_{u_{i-1}})'.
\]
where the limit holds for any sequence of partitions of the form $\Delta_N =
\{u_0 = 0 < \ldots < u_{K_N} = s\}$ and $|\Delta_N| = \max \{ u_i - u_{i-1} : i
\in 1, \ldots, K_N \}$.  It is always the case, even when $\{\bV_s^{1/2}\}_s$ is
a stochastic process correlated with $\{\bw_s\}_s$, that
\[
\int_{t-1}^t \bV_{u} du =  \ip{\bp}_t - \ip{\bp}_{t-1}.
\]
(See Proposition 2.10 in \cite{karatzas-shreve-book-1991}.)  Thus, in the
Gaussian process case, the variance of $\br_t$ is related to the quadratic
variation by
\[
\var(\br_t) = \ip{\bp}_t - \ip{\bp}_{t-1}
\]
If the assets under consideration are traded frequently, then the day-$t$
partition of trading times $\Delta_t^* = \{u_0 = t-1 < \ldots < u_{K_t} = t\}$
has $|\Delta_t^*|$ near zero so that
\[
\widehat{\mathbf{RC}}_t = \sum_{i=1}^{K_t} (\bp_{u_i} - \bp_{u_{i-1}}) (\bp_{u_i} -
\bp_{u_{i-1}})',
\]
where the summation is over $\Delta_t^*$, is a good estimate of $\ip{\bp}_t -
\ip{\bp}_{t-1}$.  This is the realized covariance.

The same logic proceeds when $\{\bV_s^{1/2}\}_s$ is stochastic process that is
independent of the Brownian motion.  In that case, the only major change is
\[
\Big( \br_t \Big| \int_{t-1}^t \bV_u du \Big) = N \Big(0, \int_{t-1}^t \bV_u du \Big),
\]
that is the log returns are a mixture of normals, so that
\[
\var \Big( \br_t \Big| \int_{t-1}^t \bV_u du \Big) = \ip{\bp}_t - \ip{\bp}_{t-1}.
\]
Since $\widehat{\mathbf{RC}}_t$ is a good estimate of $\ip{\bp}_t -
\ip{\bp}_{t-1}$ regardless of $\{\bV_s\}_s$, so long as the assets are traded
often enough, one still has a good estimate of the daily conditional variance
despite the fact that $\{\bV_s^{1/2}\}_s$ is random.  The nice thing about
quadratic variation is that it is well-defined for any process that is a
semimartingale \citep[Thm.\ 4.47]{jacod-shiryaev-2003}.  In that sense, it is a
completely non-parametric statistic; though the derivations above do not
necessarily hold once $\{\bV_s\}_s$ is correlated with the underlying Brownian
motion.  Empirical work has shown that $\widehat{\mathbf{RC}}_{t=1}^T$ can be
used to estimate and forecast the variance of the daily returns in the
univariate case \citep{andersen-etal-2001, koopman-etal-2005} and the covariance
matrix of the vector of daily returns in the multivariate case \cite{liu-2009}.



We treat the realized covariances $\{\widehat{\mathbf{RC}}_t\}_{t=1}^T$ (or
rather a different, related approximation to $\ip{\bp}_t - \ip{\bp}_{t-1}$
called realized kernels) as the noisy observations $\{\bY_t\}_{t=1}^T$ in
Section \ref{sec:example} and then infer $n$, $k$, and $\lambda$ to generate
filtered estimates and one-step ahead predictions of the latent covariance
matrices $\{\bX_t^{-1}\}_{t=1}^T$.  \cite{bn-etal-2011} describe how to
construct the matrix valued data and we follow their general approach to produce
symmetric positive-definite valued data $\{\bY_t\}_{t=1}^{927}$ for 927 trading
days and 30 assets.  Details of the construction and the data can be found in
Section \ref{sec:thedata}.

\section{Construction of Realized Kernel and Data}
\label{sec:thedata}

The data set follows the thirty stocks found in Table \ref{table:stocks}, which
comprised the Dow Jones Industrial Average as of October, 2010.  The raw data
consists of intraday tick-by-tick trading prices from 9:30 AM to 4:00 PM
provided by the Trades and Quotes (TAQ) database through Wharton Research Data
Services\footnote{Wharton Research Data Services (WRDS) was used in preparing
  this paper.  This service and the data available thereon constitute valuable
  intellectual property and trade secrets of WRDS and/or its third-party
  suppliers.  } .  The data set runs from February 27, 2007 to October 29, 2010
providing a total of 927 trading days.

\begin{table}

\begin{center}
\tiny
\begin{tabular}{l | l | l | l | l}
  Alcoa (AA) &
  American Express (AXP) &
  Boeing (BA) &
  Bank of America   (BAC) &
  Caterpillar  (CAT) \\
  Cisco  (CSCO)* &
  Chevron (CVX) &
  Du Pont (DD) &
  Disney (DIS) &
  General Electric (GE) \\
  Home Depot (HD) &
  Hewlett-Packard (HPQ) &
  IBM (IBM) &
  Intel (INTC)* &
  Johnson \& Johnson (JNJ) \\
  JP Morgan (JPM) &
  Kraft (KFT) &
  Coca-Cola (KO) &
  McDonald's (MCD) &
  3M (MMM) \\
  Merk (MRK) &
  Microsoft (MSFT)* &
  Phizer (PFE) &
  Proctor \& Gamble (PG) &
  AT\&T   (T) \\
  Traveler's (TRV) &
  United Technologies (UTX) &
  Verizon (VZ) &
  Walmart (WMT) &
  Exxon Mobil (XOM)
\end{tabular}
\caption{\label{table:stocks} The thirty stocks that make up the data set.}
\end{center}
The asterisk denotes companies whose primary exchange is the NASDAQ.  All other companies
trade primarily on the NYSE.
\end{table}

Our construction of the realized kernels is based upon \cite{bn2009rkip,
  bn-etal-2011}.  Warning: we re-use the letters $X$ and $Y$, but now they refer
to vector-valued continuous-time processes!  Barndorff-Nielsen et al.'s model,
which takes into account market microstructure noise, is
\[
X_{t_i} = Y_{t_i} + U_{t_i}
\]
where $\{t_i\}_{i=1}^n$ are the times at which the $m$-dimensional vector of log
stock prices, $\{X_t\}_{t \geq 0}$, are observed, $\{Y_t\}_{t \geq 0}$ is the
latent log stock price, and $\{U_{t_i}\}_{t=1}^n$ are errors introduced by
market microstructure.  The challenge is to construct estimates of the quadratic
variation of $\{Y_t\}$ with the noisy data $\{X_{t_i}\}_{i=1}^n$.  They do this
using a kernel approach,
\[
K(X_t) = \sum_{h=-H}^H k \Big( \frac{h}{H} \Big) \Gamma_h
\]
where
\[
\Gamma_h (X_t) = \sum_{j=h+1}^n x_j x_{j-h}', \textmd{ for } h \geq 0,
\]
with $x_j = X_{s_j} - X_{s_{j-1}}$ and $\Gamma_h = \Gamma_{-h}'$ for $h < 0$.
The kernel $k(x)$ is a weight function and lives within a certain class of
functions.  While this provides a convenient formula for calculating realized
kernels, the choice of weight function and proper bandwidth $H$ requires some
nuance.  \cite{bn-etal-2011} discuss both issues.  We follow their suggestions,
using the Parzen kernel for the weight function and picking $H$ as the average
of the collection of bandwidths $\{H_i\}_{i=1}^m$ one calculates for each asset
individually.  Before addressing either of those issues one must address the
practical problem of cleansing and synchronizing the data.

\begin{description}

\item[Clean the data]:
The data was cleaned using the following rules.
\begin{itemize}
\item Retrieve prices from only one exchange.  For most companies we
  used the NYSE, but for Cisco, Intel, and Microsoft we used FINRA's
  Alternative Display Facility.



\item If there are several trades with the same time stamp, which is
  accurate up to seconds, then the median price across all such trades
  is taken to be the price at that time.
\item Discard a trade when the price is zero.
\item Discard a trade when the correction code is not zero.
\item Discard a trade when the condition code is a letter other than
  `E' or `F'.
\end{itemize}

\item[Synchronize Prices]: Regarding synchronization, prices of different assets
  are not updated at the same instant in time.  To make use of the statistical
  theory for constructing the realized measures one must decide how to ``align''
  prices in time so that they appear to be updated simultaneously.
  Barndorff-Nielsen et al. suggest constructing a set of refresh times
  $\{\tau_j\}_{j=1}^J$ which corresponds to a ``last most recently updated
  approach.''  The first refresh time $\tau_1$ is the first time at which all
  asset prices have been updated.  The subsequent refresh times are inductively
  defined so that $\tau_n$ is the first time at which all assets prices have
  been updated since $\tau_{n-1}$.  After cleansing and refreshing the data, one
  is left with the collection $\{X_{\tau_j}\}_{j=1}^J$ from which the realized
  kernels will be calculated.

\item[Jitter End Points]: For their asymptotic results to hold Barndorff-Nielsen
  et al. suggest jittering the first and last observations
  $\{X_{\tau_j}\}_{j=1}^J$.  We do this by taking the average of the first two
  observations and relabeling the resulting quantity as the first observation
  and taking the average of the last two observations and labeling the resulting
  quantity as the last observation.

\item[Calculate Bandwidths]:

  We follow \cite{bn2009rkip} when calculating each $H_i$ individually using the
  time series $\{X^{(i)}_{t_j}\}_{j=1}^n$ before it has been synchronized or
  jittered.  Fix $i$ and suppress it from the notation---we are only considering
  a single asset.  In particular, for asset $i$ the bandwidth $H$ is estimated
  as
\[
\hat H = c^* (\hat \xi^2)^{2/5} n^{3/5}
\]
where $c^* = 0.97$ for the Parzen kernel, $n$ is the number of observations, and
\[
\hat \xi^2 = \hat \omega^2 / \widehat {IV}.
\]
$\widehat {IV}$ is the realized variance sampled on a 20 minute grid.  $\hat
\omega^2$ is an estimate of the variance of $\{U_{t_i}\}_{i=1}^n$ and is given
by
\[
\hat \omega^2 = \frac{1}{q} \sum_{k=1}^q \hat \omega_k^2
\; \textmd{ with } \;
\hat \omega_k^2  = \frac{RV^{(k)}_{dense}}{2 n_{(k)}}.
\]
The quantity $RV^{(k)}_{dense}$ is the sum of square increments taken at a
high frequency.
\[
RV^{(k)}_{dense} = \sum_{j=0}^{n_k-1} {x_{j}^{(k)2}}, \; x_j^k = (X_{qj + k} -
X_{q(j-1) +k}), k = 1, \ldots, q.
\]
and $n_k$ is the number of observations elements in $\{x_j^k\}_{j=1}^{n_k}$.
For each time series we choose $q = \lfloor n / 195 \rfloor$, which is the
average number of ticks on that day per two minute period \citep{bn2009rkip}.

\end{description}

\bibliographystyle{abbrvnat}
\bibliography{rk,/Users/jwindle/Projects/RPackage/BayesLogit/Notes/bayeslogit}{}

\begin{thebibliography}{31}
\providecommand{\natexlab}[1]{#1}
\providecommand{\url}[1]{\texttt{#1}}
\expandafter\ifx\csname urlstyle\endcsname\relax
  \providecommand{\doi}[1]{doi: #1}\else
  \providecommand{\doi}{doi: \begingroup \urlstyle{rm}\Url}\fi

\bibitem[Aguilar and West(2000)]{aguilar-west-2000}
O.~Aguilar and M.~West.
\newblock Bayesian dynamic factor models and portfolio allocation.
\newblock \emph{Journal of Business and Economic Statistics}, 18\penalty0
  (3):\penalty0 338--357, July 2000.

\bibitem[Andersen et~al.(2001)Andersen, Bollerslev, Diebold, and
  Ebens]{andersen-etal-2001}
T.~G. Andersen, T.~Bollerslev, F.~X. Diebold, and H.~Ebens.
\newblock The distribution of realized stock return volatility.
\newblock \emph{Journal of Financial Econometrics}, 61:\penalty0 43--76, 2001.

\bibitem[Barndorff-Nielsen and Shephard(2002)]{bn2002}
O.~E. Barndorff-Nielsen and N.~Shephard.
\newblock Econometric analysis of realized volatility and its use in estimating
  stochastic volatility.
\newblock \emph{Journal of the Royal Statistical Society: Series B (Statistical
  Methodology)}, 64\penalty0 (2):\penalty0 253--280, 2002.

\bibitem[Barndorff-Nielsen et~al.(2009)Barndorff-Nielsen, Hansen, Lunde, and
  Shephard]{bn2009rkip}
O.~E. Barndorff-Nielsen, P.~R. Hansen, A.~Lunde, and N.~Shephard.
\newblock Realized kernels in practice: Trades and quotes.
\newblock \emph{Econometrics Journal}, 12\penalty0 (3):\penalty0 C1--C32, 2009.

\bibitem[Barndorff-Nielsen et~al.(2011)Barndorff-Nielsen, Hansen, Lunde, and
  Shephard]{bn-etal-2011}
O.~E. Barndorff-Nielsen, P.~R. Hansen, A.~Lunde, and N.~Shephard.
\newblock Multivariate realized kernels: Consistent positive semi-definite
  estimators of the covariation of equity prices with noise and non-synchronous
  trading.
\newblock \emph{Journal of Econometrics}, 162:\penalty0 149--169, 2011.

\bibitem[Bollerslev(1986)]{bollerslev-1986}
T.~Bollerslev.
\newblock Generalized autoregressive conditional heteroskedasticity.
\newblock \emph{Journal of Econometrics}, 31:\penalty0 307--327, 1986.

\bibitem[Brown(1959)]{brown-book-1959}
R.~G. Brown.
\newblock \emph{Statistical Forecasting for Inventory Control}.
\newblock McGraw-Hill, 1959.

\bibitem[Choi and Christensen(2011)]{choi-christensen-2011}
C.~Choi and H.~I. Christensen.
\newblock Robust 3d visual tracking using particle filters on the {SE}(3)
  group.
\newblock In \emph{IEEE International Conference on Robotics and Automation},
  pages 4384--4390, 2011.

\bibitem[D\'iaz-Garc\'ia and
  J\'{a}imez(1997)]{diaz-garcia-gutierrez-jaimez-1997}
J.~A. D\'iaz-Garc\'ia and R.~G. J\'{a}imez.
\newblock Proof of the conjectures of {H}. {U}hlig on the singular multivariate
  beta and the {J}acobian of a certain matrix transformation.
\newblock \emph{The Annals of Statistics}, 25:\penalty0 2018--2023, 1997.

\bibitem[Edelman(2005)]{edelman-handouts-2005}
A.~Edelman.
\newblock The mathematics and applications of (finite) random matrices, 2005.
\newblock URL \url{http://web.mit.edu/18.325/www/handouts.html}.
\newblock See handouts 1-4.

\bibitem[Hauberg et~al.(2013)Hauberg, Lauze, and Pedersen]{hauberg-etal-2013}
S.~Hauberg, F.~Lauze, and K.~S. Pedersen.
\newblock Unscented {K}alman filtering on {R}iemannian manifolds.
\newblock \emph{J Math Imaging Vis}, 46:\penalty0 103--120, 2013.

\bibitem[Jacod and Shiryaev(2003)]{jacod-shiryaev-2003}
J.~Jacod and A.~N. Shiryaev.
\newblock \emph{Limit Theorems For Stochastic Processes}.
\newblock Springer, 2003.

\bibitem[Julier and Uhlmann(1997)]{julier-uhlmann-1997}
S.~J. Julier and J.~K. Uhlmann.
\newblock New extensions of the {K}alman filter to nonlinear systems.
\newblock In \emph{Signal Processing, Sensor Fusion, and Target Recognition
  VI}, volume 3068, 1997.

\bibitem[Kalman(1960)]{kalman-1960}
R.~E. Kalman.
\newblock A new approach to linear filtering and prediction problems.
\newblock \emph{Journal of Basic Engineering}, 82 (Series D):\penalty0 35--45,
  1960.

\bibitem[Karatzas and Shreve(1991)]{karatzas-shreve-book-1991}
I.~Karatzas and S.~E. Shreve.
\newblock \emph{{B}rownian Motion and Stochastic Calculus}.
\newblock Springer, 1991.

\bibitem[Koopman et~al.(2005)Koopman, Jungbackera, and Hol]{koopman-etal-2005}
S.~J. Koopman, B.~Jungbackera, and E.~Hol.
\newblock Forecasting daily variability of the {S\&P 100} stock index using
  historical, realised and implied volatility measurements.
\newblock \emph{Journal of Empirical Finance}, 12:\penalty0 445--475, 2005.

\bibitem[Kwon and Park(2010)]{kwon-park-2010}
J.~Kwon and F.~C. Park.
\newblock Visual tracking via particle filtering on the affine group.
\newblock \emph{The International Journal of Robotics Research}, 29:\penalty0
  198--217, 2010.

\bibitem[Liu(2009)]{liu-2009}
Q.~Liu.
\newblock On portfolio optimization: How and when do we benefit from
  high-frequency data?
\newblock \emph{Journal of Applied Econometrics}, 24:\penalty0 560--582, 2009.

\bibitem[Mikusi\'nski and Taylor(2002)]{mikusinski-taylor-2002}
P.~Mikusi\'nski and M.~D. Taylor.
\newblock \emph{An Introduction to Multivariate Analysis}.
\newblock Birkh\"{a}user, 2002.

\bibitem[Muirhead(1982)]{muirhead-1982}
R.~J. Muirhead.
\newblock \emph{Aspects of Multivariate Statistical Theory}.
\newblock Wiley, 1982.

\bibitem[Porikli et~al.(2006)Porikli, Tuzel, and Meer]{porikli-etal-2006}
F.~Porikli, O.~Tuzel, and P.~Meer.
\newblock Covariance tracking using model update based on {L}ie algebra.
\newblock In \emph{Computer Vision and Pattern Recognition}, pages 728 -- 735,
  2006.

\bibitem[Prado and West(2010)]{prado-west-2010}
R.~Prado and M.~West.
\newblock \emph{Time Series: Modeling, Computation, and Inference}, chapter
  Multivariate DLMs and Covariance Models, pages 263--319.
\newblock Chapman \& Hall/CRC, 2010.

\bibitem[Quintana and West(1987)]{quintana-west-1987}
J.~M. Quintana and M.~West.
\newblock An analysis of international exchange rates using multivariate
  {DLM}s.
\newblock \emph{The Statistician}, 36:\penalty0 275--281, 1987.

\bibitem[Shephard(1994)]{shephard-1994}
N.~Shephard.
\newblock Local scale models: State space alternative to integrated {GARCH}
  processes.
\newblock \emph{Journal of Econometrics}, 60:\penalty0 181--202, 1994.

\bibitem[Srivastava and Klassen(2004)]{srivastava-klassen-2004}
A.~Srivastava and E.~Klassen.
\newblock Bayesian and geometric subspace tracking.
\newblock \emph{Advances in Applied Probability}, 36\penalty0 (1):\penalty0
  43--56, 2004.

\bibitem[Taylor(1982)]{taylor-1982}
S.~J. Taylor.
\newblock \emph{Financial Returns Modelled by the Product of Two Stochastic
  Processes--a Study of Daily Sugar Prices 1961-1979}, pages 203--226.
\newblock Amersterdam: North-Holland., 1982.

\bibitem[Tompkins and Wolfe(2007)]{tompkins-wolfe-2007}
F.~Tompkins and P.~J. Wolfe.
\newblock Bayesian filtering on the {S}tiefel manifold.
\newblock In \emph{Computational Advances in Multi-Sensor Adaptive Processing},
  pages 261 -- 264, 2007.

\bibitem[Tyagi and Davis(2008)]{tyagi-davis-2008}
A.~Tyagi and J.~W. Davis.
\newblock A recursive filter for linear systems on {R}iemannian manifolds.
\newblock In \emph{IEEE Conference on Computer Vision and Pattern Recognition},
  2008.

\bibitem[Uhlig(1994)]{uhlig-1994}
H.~Uhlig.
\newblock On singular {W}ishart and singular multivariate beta distributions.
\newblock \emph{The Annals of Statistics}, 22\penalty0 (1):\penalty0 395--495,
  1994.

\bibitem[Uhlig(1997)]{uhlig-1997}
H.~Uhlig.
\newblock Bayesian vector autoregressions with stochastic volatility.
\newblock \emph{Econometrica}, 65\penalty0 (1):\penalty0 59--73, Jan. 1997.

\bibitem[West and Harrison(1997)]{west-harrison-book-1997}
M.~West and J.~Harrison.
\newblock \emph{Bayesian Forecasting and Dynamic Models}.
\newblock Springer Verlag, 1997.

\end{thebibliography}

\end{document}